\newcommand{\Prob}{\ensuremath{{\mathbb P}}} 
\newcommand{\R}{\ensuremath{{\mathbb R}}}
\newcommand{\E}{\ensuremath{{\mathbb E}}}
\newcommand{\Ind}{\ensuremath{{\mathds{1}}}} 
\newcommand{\distas}[1]{\mathbin{\overset{#1}{\kern\z\sim}}}%
\newsavebox{\mybox}\newsavebox{\mysim}
\newcommand{\distras}[1]{%
  \savebox{\mybox}{\hbox{\kern3pt$\scriptstyle#1$\kern3pt}}%
  \savebox{\mysim}{\hbox{$\sim$}}%
  \mathbin{\overset{#1}{\kern\z@\resizebox{\wd\mybox}{\ht\mysim}{$\sim$}}}%
}
\numberwithin{equation}{section}
\theoremstyle{plain}
\newtheorem{proposition}{Proposition}[section]
\newtheorem{lemma}{Lemma}[section]
\newtheorem{definition}{Definition}[section]
\begin{document}

\begin{frontmatter}
\title{Imputation Scores} 
\runtitle{Imputation Scores}
\thankstext{T1}{Authors with equal contribution.}

\begin{aug}
\author{\fnms{Jeffrey} \snm{Näf}*}, 
\author{\fnms{Meta-Lina} \snm{Spohn}*},
\author{\fnms{Loris} \snm{Michel}}
\and
\author{\fnms{Nicolai} \snm{Meinshausen}}



\address{Seminar for Statistics\\
ETH Zürich\\
Rämistrasse 101\\
8092 Zürich\\
Switzerland\\
E-mail: metalina.spohn@stat.math.ethz.ch}

\end{aug}


\begin{keyword}
\kwd{Ranking}
\kwd{Random Projections}
\kwd{Tree Ensembles}
\kwd{Random Forest}
\kwd{KL-Divergence}
\end{keyword}


\begin{abstract}
Given the prevalence of missing data in modern statistical research, a broad range of methods is available for any given imputation task. How does one choose the `best' imputation method in a given application? The standard approach is to select some observations, set their status to missing, and compare prediction accuracy of the methods under consideration of these observations. Besides having to somewhat artificially mask observations, a shortcoming of this approach is that imputations based on the conditional mean will rank highest if predictive accuracy is measured with quadratic loss. In contrast, we want to rank highest an imputation that can sample from the true conditional distributions. In this paper, we develop a framework called ``Imputation Scores'' (I-Scores) for assessing missing value imputations. We provide a specific I-Score based on density ratios and projections, that is applicable to discrete and continuous data. It does not require to mask additional observations for evaluations and is also applicable if there are no complete observations. The population version is shown to be proper in the sense that the highest rank is assigned to an imputation method that samples from the correct conditional distribution. The propriety is shown under the \emph{missing completely at random} (MCAR) assumption but is also shown to be valid under \emph{missing at random} (MAR) with slightly more restrictive assumptions. We show empirically on a range of data sets and imputation methods that our score  consistently ranks true data high(est) and is able to avoid pitfalls usually associated with performance measures such as RMSE. Finally, we provide the \textsf{R}-package \texttt{Iscores} available on CRAN with an implementation of our method.
\end{abstract}

\end{frontmatter}

\section{Introduction}

Missing data is a widespread problem in both research and practice. In the words of \citet{Meng1994}, ``imputing incomplete observations is becoming an indispensable intermediate phase between the two traditional phases [...] of collecting data and analyzing data.'' With the increasing dimensionality and size of modern data sets, the problem of missing data only gets more pronounced. 
One reason for this is that even for moderate dimensionality and overall fraction of missing entries, the set of complete observations tends to be small if not zero. As such, one would often have to discard a substantial part of the data when keeping only complete observations. In addition, depending on the missingness mechanism, working with complete cases is invalid in many situations.

Consequently, there is a large body of literature on imputation methods, filling in the missing entries in a given data set, see e.g., \citet{RubinLittlebook}. Such methods include the very general \emph{Multivariate Imputation by Chained
Equations} (mice) approach \citep{mice, Deng2016}, \emph{missForest} \citep{stekhoven_missoforest}, and \emph{multiple imputation in principal component analysis} (mipca) \citep{josse_mipca}. In more recent work, methods based on non-parametric Bayesian strategies \citep{Bayesianmethods}, generative adversarial networks \citep{Gain} and optimal transport \citep{josse} were developed. These are just a few examples. The `R-miss-tastic' platform, developed in an effort to collect knowledge and methods to streamline the task of handling missing data, lists over 150 packages \citep{mayer2021rmisstastic}.\

Despite the broad range of imputation methods, not a lot of attention has been paid in the literature to the question of how to evaluate and choose an imputation method for a given data set. If the true data underlying the missing entries are available (achievable if observations are artificially masked), the imputed values are often simply compared  to the true ones through the root mean-squared error (RMSE) or mean absolute error (MAE) (see e.g., \citet{stekhoven_missoforest}, \citet{Waljee2013}, \citet{Gain}, \citet{josse} and many others). For categorical variables, the percentage of correct predictions (PCP) can be used \citep{linreview}. To select an imputation method, the one with the lowest overall error value is chosen. This simple method is common, despite having major drawbacks. For example, RMSE (respectively, MAE and PCP) favors methods that impute with the conditional means (respectively, conditional medians and conditional modes), versus samples from the true conditional distributions \citep{gneiting_point}. As outlined in \citet[Chapter 2.5.1]{VANBUUREN2018}, this may lead to a choice of ``nonsensical'' imputation methods. In particular, they tend to artificially strengthen the association between variables, which can lead to invalid inference. A motivating example of such a case is given in Section \ref{motivatingExample}.

Another approach is to fix a \emph{target quantity}, calculated once on the full data and once on the imputed data and use suitable distances between them for a quality assessment of the imputation. The target quantity can be an expectation, a regression or correlation coefficient, a variance \citep{horton, mice-rf-method, josse_baysmipca, midastouch, xu} or more involved estimands such as treatment effects using propensity scores \citep{Choi}. While this approach is sensible if (one) target quantity can clearly be defined, this may not be ideal in many applications. Often it is not even clear beforehand what the target of interest is, or one deals with several targets of interests, each of which might lead to a different choice of imputation methods \citep[Chapter 2.3.4]{VANBUUREN2018}. 

As such, there are situations where one might prefer to have a tool that is able to identify a good imputation method for a wide range of targets. A way to achieve this is to simply define the notion of target measure more broadly. As noted in \cite{josse}: ``A desirable property of imputation methods is that they should preserve the joint and marginal distributions.'' That is, if $x^*$ is a complete observation and $z$ an observation with imputed values, we want them to be realizations of the same distribution. This can be seen as a special case of the above target quantity approach, with the target being the underlying joint distribution from which the data were generated. 

With this goal in mind, more elaborate distributional metrics, such as $\phi$-divergences \citep{phidivergence} or integral probability metrics (see e.g., \citet{Sriperumbudur2012}) are necessary, even when simply comparing an imputed data set to the true one. This was utilized in \cite{josse} with the Wasserstein Distance (WD), but seems otherwise uncommon, despite the drawbacks of measures like RMSE and MAE.

In applications, the true data for the missing values are rarely available. Ad-hoc methods have been proposed to assess the success of the imputation; one such approach is to mask some observed values, impute them, and then compare (via, say, RMSE) the imputations with the observed values. In this paper, we aim to assess more directly the quality of an imputation method. We propose the flexible framework of proper Imputation Scores (I-Scores) to evaluate imputation methods when (i) the target measure is the true data distribution, (ii) the true data underlying the missing values are not available, and (iii) we do not want to artificially mask observations for the evaluation. To overcome the difficulty of observing only incomplete data, we use random projections in the variable space. We propose a specific estimator of an I-Score based on density ratios (DR I-Score). The density ratio is estimated through classification, where the classifier acts as a discriminator between observed and imputed distribution. The result is an easy-to-use imputation scoring method. Under suitable assumptions on the missingness mechanism, we prove propriety of our DR I-Score, meaning that the true underlying data distribution is ranked highest in expectation. The propriety is shown under a \emph{missing completely at random} (MCAR) assumption on the missingness mechanism. It is also shown to be valid under \emph{missing at random} (MAR) if we restrict the random projections in the variable space to always include all variables, which in turn requires access to some complete observations. Empirical results show that indeed true data samples are generally ranked highest by the proposed I-Score if compared with widely used imputation methods. 
We provide an implementation of the method in the \textsf{R}-package \texttt{Iscores}, available on CRAN and GitHub (\url{https://github.com/missValTeam/Iscores}). The Supplementary Material \citep{oursupplement} contains the code used to generate the empirical analysis.

Section~\ref{problemformulationsection} introduces the notation and Section~\ref{motivatingExample} showcases a motivating example. Section~\ref{scoredefsec} defines the framework
of Imputation Scores (I-Scores) along with a specific I-Score, presenting a way of evaluating imputation methods in the presence of missing values. Section~\ref{pracaspects} then details the estimation of the I-Score and describes the algorithm. Section~\ref{relwork} presents further related work. Section~\ref{empresults_1} empirically validates the estimated I-Score on a range of data sets and on a real data set with missing values, and Section~\ref{sec:discussion} concludes with a discussion.

\section{Notation}\label{problemformulationsection}


We assume an underlying probability space $(\Omega, \mathcal{A}, \Prob)$ on which all random elements are defined. Throughout, we take $\mathcal{P}$ to be a collection of probability measures on $\R^d$, dominated by some $\sigma$-finite measure $\mu$. We denote the (unobserved) complete data distribution by $P^* \in \mathcal{P}$ and by $P$ the actually observed distribution with missing values. We assume that $P$ ($P^*$) has a density $p$ ($p^*$). We take $X$ ($X^*$) to be the random vector with distribution $P$ ($P^*$) and let $x_{i}$ ($x^*_{i}$), $i=1,\ldots, n$, be realizations of an i.i.d. copy of the random vector $X$ ($X^*$). Similarly, $M$ is the random vector in $\{0,1\}^d$, encoding the missingness pattern of $X$, with realization $m$, whereby for  $j=1,\ldots,d$, $m_j=0$ means that variable $j$ is observed, while $m_j=1$ means it is missing. For instance, the observation $(\texttt{NA}, x_2,x_3)$ corresponds to the pattern $(1,0,0)$. We denote the distribution of $M$ as $P^M$, with support $\mathcal{M}$, so that $\Prob(M=m)=P^M(m)$. 

For a subset $A \subseteq \{1,\ldots,d\}$ and for a random vector $X$ or an observation $x$ in $\R^d$, we denote with $X_A$ ($x_A$) its projection onto that subset of indices. 
For instance if $d=3$ and $A=\{1,2\}$, then $X_A=(X_1, X_2)$ ($x_A=(x_1, x_2)$). The projection onto $A$ of the observation $x_i$, $(x_i)_A$, is denoted as $x_{i,A}$. Analogously, for a missingness pattern $M \sim P^M$ or an observation $m$ in $\{0,1\}^d$, we denote with $M_A$ $(m_A)$ its projection onto the subset of indices in $A$. If $X$ has a density $p$ on $\R^d$, we denote by $p_{A}$ the density of the projection $X_A$. 

To denote assumptions on the missingness mechanism, we use a notation along the lines of \cite{whatismeant}. For each realization $m$ of the missingness random vector $M$ we define with $o(X,m):=(X_j)_{j \in \{1,\ldots,d\}:m_j=0}$ the observed part of $X$ according to $m$ and with $o^c(X,m):=(X_j)_{j \in \{1,\ldots,d\}:m_j=1}$ the corresponding missing part. Note that this operation only filters the corresponding elements of $X$ according to $m$, regardless whether or not these elements are actually missing or not. For instance, we might consider the unobserved part $o^c(X,m)$ according to $m$ for the fully observed $X$, that is $X \sim P|M=\mathbf{0}$, where $\mathbf{0}$ denotes the vector of zeros of length $d$.

\section{Motivating Example} \label{motivatingExample}

\begin{figure}
    \centering
    \begin{subfigure}[b]{1\textwidth}
    \centering
        \includegraphics[width=1\linewidth]{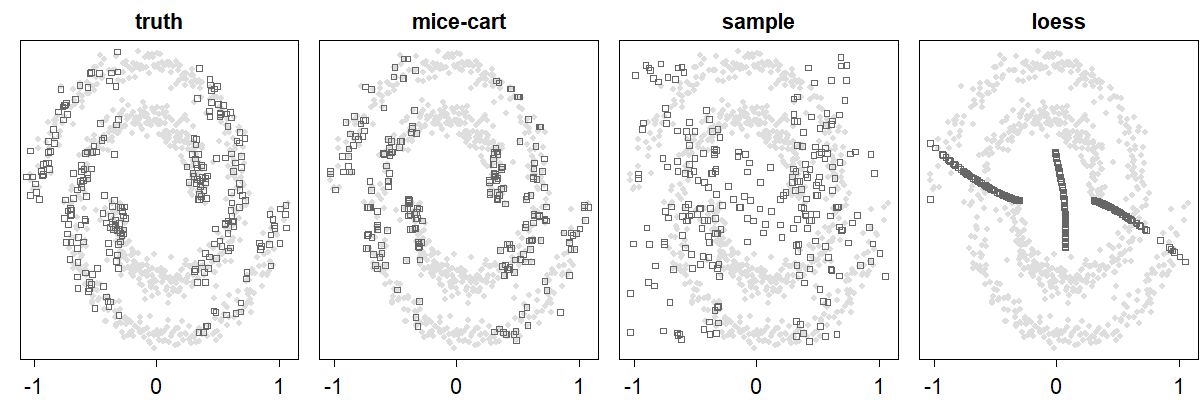}
        \caption{MAR}
    \end{subfigure}
    \begin{subfigure}[b]{1\textwidth}
    \centering
       \includegraphics[width=1\linewidth]{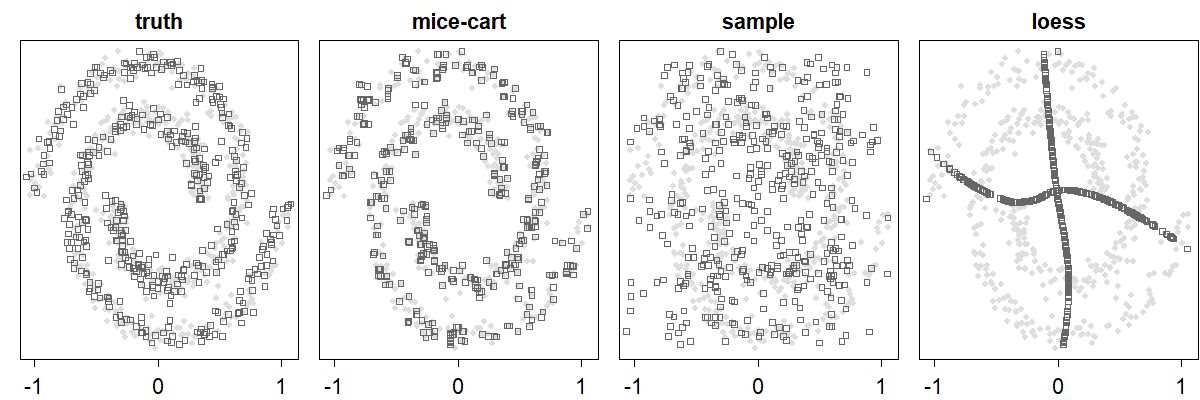}
        \caption{MCAR}
    \end{subfigure} 
    \caption{Imputations for the spiral example $(n=1000,$ $p=2)$. The complete observations are plotted in light gray with dots and the imputed observations in dark gray with squares. From left to right, true data, mice-cart, sample and loess are shown. The top row shows the MAR case, the bottom row the MCAR case. }
    \label{fig:my_label}
\end{figure}
As a toy motivating example we consider a noisy version of the spiral in two dimensions from the \textsf{R}-package \texttt{mlbench} \citep{mlbench}. The concepts and tools introduced here will be detailed and used in the remainder of the paper. We generated $1000$ observations of the noisy spiral, each entry having a probability of being missing of $p_{miss}=0.3$ with missingness mechanisms MAR and MCAR. In the case of MAR, the variable $X_2$ is missing with probability $p_{miss}$ if the corresponding $X_1$ is $> 0.3$ or $ < -0.3$ and observed otherwise. The variable $X_1$ is missing with probability $p_{miss}$, if $X_2 \in [-0.3,0.3]$. In the case of MCAR, we set every value in the data matrix to \texttt{NA} with probability $p_{miss}$. For MCAR we face already in this low dimensional example an average of around $(1-(1-0.3)^2)*1000 \approx 500$ observations with at least one missing value, which is half the sample size. 

We might decide to impute the missing values and do so with three methods: i) simply estimating the conditional expectation $\E[X_1|X_2]$ and $\E[X_2| X_1]$ on the complete cases using a local regression approach (``loess''), and filling in the missing values by predicting from $X_1$ (if $X_2$ is missing) or from $X_2$ (if $X_1$ is missing), ii) random sampling an observed value for each missing entry (``sample''), and iii) mice \citep{mice} combined with a single tree in each iteration (``mice-cart''). See Appendix \ref{impmethodsdetail} for more explanation on sample and mice-cart.

Though a very artificial example, it highlights some interesting features of different evaluation methods of the imputations. As mentioned in the introduction, our target is $P^*$, the full distribution of the data. In this two dimensional example, a visual evaluation is possible. Figure~\ref{fig:my_label} shows the resulting imputations. While mice-cart and the true underlying data are hard to distinguish, it is apparent that sample is worse than either and loess in turn is much worse than sample.

We may now try to quantify this visually obtained ordering. For the three imputations as well as the true underlying data we compute our DR I-Score (defined later). This score is positively oriented: A higher value indicates a better performance. We additionally compute the negative of RMSE (``negRMSE''), where the negative sign assures the same orientation of higher values indicating better performance. We emphasize that negRMSE is computed using the unobserved full data set, as commonly done in research papers introducing new methods of imputation. We also computed approximated two-sided 95\%-confidence intervals (CI) of the DR I-Score and negRMSE, as detailed in Section \ref{pracaspects}. The results are shown in Figure~\ref{fig:geo}. In the left plot (a), the score of the true data was subtracted from the scores to visualize the comparison to the true data imputation. In the right plot (b), normalizing negRMSE scores, using the true data imputation, is unnecessary, since negRMSE is $0$ by definition for the true data imputation.


Maybe unsurprisingly, negRMSE appears to poorly measure the distributional difference between imputed and real data set. In particular, for MCAR and MAR its value is highest for the loess imputation, and significantly so, based on the approximated CI. This is despite the fact that negRMSE is allowed to use the unobserved data. In contrast, our proposed DR I-Score has no access to the unobserved data, and nonetheless ranks true data clearly highest, with mice-cart as a close second, followed by sample and loess. Thus, without using the unobserved samples, our score manages to give a sensible ranking in this example that is in line with the visual impression, for both a MCAR and MAR missingness mechanisms. 



\begin{figure}
     \centering
     \begin{subfigure}[b]{0.46\textwidth}
         \centering
         \includegraphics[width=1\textwidth]{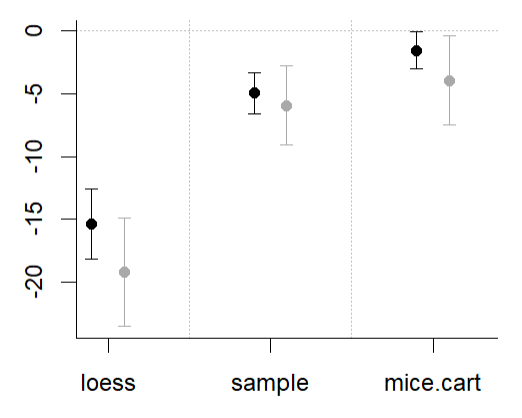}
              \caption{Estimated DR I-Score with CIs.}
     \end{subfigure}
     \begin{subfigure}[b]{0.46\textwidth}
         \centering
         \includegraphics[width=1\textwidth]{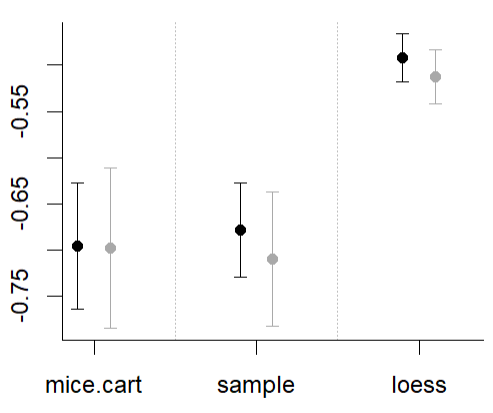}
         \caption{Negative RMSE with CIs.}
     \end{subfigure}
     \caption{Spiral example ($n=1000$, $p=2$):  Estimation of the proposed DR I-Score (a) and the negRMSE (b) with corresponding CIs for the methods mice-cart, sample and loess under the missingness mechanisms MCAR (black) and MAR (grey). We obtained the CIs by subsampling as described in Section \ref{section:jackknife}. In (a) we subtracted the score of the true data from the scores of the methods, thus the line at $0$ represents the true data score. We used $p_{miss}=0.3$ to generate missing values. }
\label{fig:geo}

     \end{figure}

\section{Scores for Imputations} \label{scoredefsec}


Our notion of proper Imputation Score (I-Score) is inspired by the classical notion of proper scores (or proper scoring rules) and proper divergence functions (see e.g. \citet{gneiting, gneiting_point,Scoresforclimate}). A score assesses whether a probabilistic forecast is close to the true distribution. In a traditional sense, a score $S$ takes a predictive distribution $Q_1 \in \mathcal{P}$ as a first argument, as well as $X \sim Q_2 \in \mathcal{P}$ as a second: $S(Q_1,X)$. The corresponding expectation over the distribution $Q_2$ is denoted by $S(Q_1,Q_2 ):= \E_{X \sim Q_2}[S(Q_1,X)]$. Thus, a score value is assigned to $Q_1$ over the comparison with $Q_2$. A desirable natural property is that $ S(Q_1,Q_2)\leq S(Q_2,Q_2) $, for all $Q_1,Q_2 \in \mathcal{P}$. This ensures that the true distribution $Q_2$ of $X$ is scored highest in expectation. A score that meets this requirement is referred to as proper.

\subsection{Imputation Score (I-Score)} 


We now define the notion of a proper I-Score. Despite the analogy to the classical notion of scores, we will need some deviations for the setting of imputation scores as applied to partially observed samples. Recall that $P$ refers to the distribution of $X$ with missing values and correspondingly, $P^* \in \mathcal{P}$ refers to the distribution of $X^*$ without missing values. We denote with $P^M$ the distribution of $M$ with domain $\mathcal{M}$. Naturally, $(P^*, P)$ form a tuple, whereby $P$ is derived from $P^*$ and $P^M$. We define $\mathcal{H}_{P} \subset \mathcal{P}$ to be the set of imputation distributions compatible with $P$, that is 
\begin{align}\label{imputationdistset}
      \mathcal{H}_{P}:= \{H \in \mathcal{P}: h(o(x,m)|M=m)=p(o(x,m)|M=m) \text{ for all } m \in \mathcal{M} \},
\end{align}
where as above for a pattern $m$, $o(x,m)=(x_j)_{j \in \{1,\ldots,d\}:m_j=0}$ subsets the observed elements of $x$ according to $m$, while $o^c(x,m)=(x_j)_{j \in \{1,\ldots,d\}:m_j=1, }$ subsets the missing elements\footnote{Note that while $h$ and $p$ are densities on $\R^d$, notation is slightly abused by using expressions such as $h(o(x,m)|M=m)$ and $p(o(x,m)|M=m)$, which are densities on $\R^{|\{j:m_j=0 \}|}$.}. Clearly, $P^* \in \mathcal{H}_{P}$, so that the true distribution $P^*$ can be seen as an imputation.



\begin{definition} \textbf{Imputation Score (I-Score)}  \label{Iscoredef}

Let $(P^*, P)$ be the tuple of distributions as described above and $H \in \mathcal{H}_P$, as given in \eqref{imputationdistset}. 
A real-valued function $S_{NA}(H, P)$
is a proper I-Score iff
$$ S_{NA}(H, P) \leq S_{NA}(P^*, P), 
$$
for any imputation distribution $H\in \mathcal{H}_{P}$. It is strictly proper iff the inequality is strict for $H \neq P^*$.
\end{definition}

In practice, we do not have access to $H$ and $P$ but can only draw samples from $(H,P)$ by drawing a sample $x$ from $P$ (with missingness pattern $m$) and the corresponding coupled imputation $z$ from $H$ that matches $x$ on all observed variables, so that $o(z,m) = o(x,m)$. In the ideal case where $(Z,X)$ are drawn from $(P^{*},P)$, the conditional distribution of $o^c(Z, M)$ given $o(Z,M)$ is identical to the conditional distribution of $o^c(X^*,M)$ given $o(X^*,M)$.
In principle, we could also score the full conditional distribution of imputation algorithms instead of just supposing that we can draw samples from the conditional distribution. However, most popular imputation algorithms do not specify the conditional distribution explicitly and we thus focus on the latter case. We discuss the sample-based implementation in Section \ref{pracaspects}.

A key difference to the classical notion of scores \citep{gneiting, gneiting_point} is that we do not observe a sample from the `true' distribution of the missing values of $x$, given missingness pattern $m$ and observed values $o(x,m)$. We can hence not compare the 
predictions of the missing values (whether they are explicit distributional predictions or samples from the imputation) to the realization of a sample from the underlying `true' conditional distribution. It seems perhaps surprising that we can still obtain proper I-Scores in this setting. A central assumption is that the missingness mechanism is such that any subset of variables has a positive probability of being observed (in contrast to a classical regression setting where the target variable is always unobserved before the prediction is made). The variability of the missingness patterns hence allows to construct proper I-Scores. An alternative approach would be to mask observed values as unobserved and score the imputation on these held-out data. This would require to change the distribution of the missingness patterns, however, which would be especially problematic in absence of a MCAR assumption.

We immediately see that negRMSE used in the motivating example of Section \ref{motivatingExample} does not fit into the framework of proper I-Scores, simply because it requires access to the true underlying values according to $P^*$.

We now define a specific I-Score that satisfies Definition \ref{Iscoredef}, the density ratio (DR) I-Score.






\subsection{Density Ratio I-Score}

The goal is to provide a methodological framework to construct an I-Score based on density ratios, the DR I-Score. In addition, we try to circumvent the problem of not observing $P^*$ in a data efficient way, employing \emph{random projections} in the variable space. Considering observations that are projected into a lower-dimensional space allows us to recover more complete cases of the true underlying distribution. As an example, $x = (\texttt{NA},1,\texttt{NA},2)$ is not complete, however if we project it to the dimensions $A = \{2,4\}$, $x_A=(1,2)$ is complete in this lower-dimensional space. In what follows, we average distributional differences between imputed and complete cases over different projections. In fact, these projections constitute an additional source of randomness over which we integrate to obtain the DR I-score, as we detail now.

Let $\mathcal{A}$ be a subset of the power set $2^{\{1,\ldots ,d \}}$, that denotes the set of all possible projections, such that each $A\in \mathcal{A}$ describes a set of variables we project onto. Assume the projections are chosen randomly according to a distribution $\mathcal{K}$ with support $\mathcal{A}$ (see Section \ref{distributionkappa} for more details on $\mathcal{K}$). Similarly, we define for each $A$ a distribution $P^M_A$ over the missingness patterns in $P_A$ with support $\mathcal{M}_A$. That is, $m_A \in \mathcal{M}_A$ is a given missingness pattern on the projection with associated probability $\Prob(M_A=m_A)=P^M_A(m_A)$. For any distribution $H \in \mathcal{H}_{P}$ we can then consider the conditional distribution $H_A| M_A=m_A$, i.e. the distribution of an imputation $H$, given the missingness pattern $m_A$ on the projection $A$. We will abbreviate this distribution with $H_{m_A}$, so that the density of $H_{m_A}$ is given as $h_{m_A}(x_A):= h_A(x_A|M_A=m_A)$. Denoting with $\mathbf{0}$ the vector of zeros, we similarly write $p_A(x_A|M_A=\mathbf{0})$ for the density of the fully observed points on the projection $A$.


The following definition specifies the density ratio I-Score as an expected value of the log-ratio of the two densities, where the expectation is also taken over $A$ and $M_A$.


\begin{definition}\label{phidef} \textbf{Density Ratio I-Score}\\
We define the DR I-Score of the imputation distribution $H$ by
    \begin{equation}
       S^{*}_{NA}(H, P)= \E_{A \sim \mathcal{K}, M_A \sim P_A^M, X_A \sim H_{M_A}}\left[\log \left(\frac{p_A(X_A \mid M_A=\mathbf{0})}{h_{M_A}(X_A)} \right)\right]. \label{drscore}
    \end{equation}
\end{definition} 

If $p_A(X_A \mid M_A=\mathbf{0})=0$ for a set of $X_A \sim H_{M_A}$ with nonzero probability, we take $S^{*}_{NA}(H, P)=-\infty$ as a convention. As an intuition, the DR I-Score given by \eqref{drscore} can be rewritten as
      \begin{equation*}
          S_{NA}^{*}(H,P) = - \E_{A \sim  \mathcal{K}, M_A \sim P_A^M}D_{KL}(h_{M_A} \mid \mid p_A(\cdot \mid M_A=\mathbf{0})), 
      \end{equation*}
     where the Kullback-Leibler divergence (KL divergence) between two distributions $P,Q \in \mathcal{P}$ on $\R^d$ with densities $p$, $q$ is defined by
     \begin{equation*}
         D_{KL}(p \mid \mid q) := \int p(x) \log \left( \frac{p(x)}{q(x)}\right)d \mu(x).
     \end{equation*}


To prove that the DR I-Score is indeed a proper I-Score we need an assumption on the missingness mechanism. This is shown in the following result:

\begin{proposition}\label{amazingproposition}
Let $H\in \mathcal{H}_{P}$, as defined in \eqref{imputationdistset}. If for all $A\in \mathcal{A}$,
    \begin{align}\label{MARcond}
       &p^*(o^c(x_A,m_A)|o(x_A,m_A), M=m'_A) = p^*(o^c(x_A,m_A)|o(x_A,m_A)), \nonumber \\
       &\text{ for all } m'_A, m_A \in \mathcal{M}_A,
    \end{align}
 then $S^{*}_{NA}(H, P)$ in \eqref{drscore} is a proper I-Score.
\end{proposition}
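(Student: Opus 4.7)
The plan is to reformulate the DR I-Score in terms of Kullback--Leibler divergences and then exploit non-negativity. First, I would invoke the identity
\begin{equation*}
S^{*}_{NA}(H,P) \;=\; -\,\E_{A \sim \mathcal{K},\, M_A \sim P^M_A}\!\left[ D_{KL}\!\left( h_{M_A} \,\|\, p_A(\cdot \mid M_A = \mathbf{0}) \right) \right],
\end{equation*}
which is the KL rewriting displayed directly after Definition~\ref{phidef}. Because $D_{KL}$ is non-negative and vanishes precisely when the two densities agree $\mu$-almost everywhere, the map $H \mapsto S^{*}_{NA}(H,P)$ is already bounded above by $0$, and the proof reduces to showing that $S^{*}_{NA}(P^*,P) = 0$. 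Concretely, it suffices to verify that for every $A \in \mathcal{A}$ and every $m_A \in \mathcal{M}_A$ the density $p^*_A(\,\cdot \mid M_A = m_A)$, which is exactly $h_{m_A}$ when $H = P^*$, coincides with $p_A(\,\cdot \mid M_A = \mathbf{0})$.

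The easier half is the right-hand side. On the event $\{M_A = \mathbf{0}\}$ every coordinate in $A$ is fully observed, so the compatibility condition $h(o(x,m)\mid M=m) = p(o(x,m)\mid M=m)$ defining $\mathcal{H}_P$, applied to every pattern $m$ with $m_A = \mathbf{0}$ and then marginalised, yields $p_A(\,\cdot \mid M_A = \mathbf{0}) = p^*_A(\,\cdot \mid M_A = \mathbf{0})$. The main task is then to show
\begin{equation*}
p^*_A(x_A \mid M_A = m_A) \;=\; p^*_A(x_A \mid M_A = \mathbf{0}) \qquad \text{for every } m_A \in \mathcal{M}_A.
\end{equation*}
For this I would split $x_A$ along $m_A$ into $o(x_A, m_A)$ and $o^c(x_A, m_A)$ and apply the chain rule
\begin{equation*}
p^*_A(x_A \mid M_A = m') \;=\; p^*\!\bigl(o^c(x_A,m_A)\,\big|\,o(x_A,m_A), M_A = m'\bigr)\cdot p^*\!\bigl(o(x_A,m_A)\,\big|\,M_A = m'\bigr)
\end{equation*}
first with $m' = m_A$ and then with $m' = \mathbf{0}$. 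Hypothesis~\eqref{MARcond} replaces each of the two conditional factors on the right by the common, pattern-free conditional $p^*(o^c(x_A,m_A)\mid o(x_A,m_A))$, which reduces the identification to equality of the two marginal factors $p^*(o(x_A,m_A)\mid M_A=m_A)$ and $p^*(o(x_A,m_A)\mid M_A=\mathbf{0})$.

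Matching these marginal factors is where I expect the real difficulty. The cleanest resolution is to apply~\eqref{MARcond} a second time, now with the complementary pattern on $A$ so that the roles of observed and missing are swapped; combining the two conditional invariances and integrating out produces the required marginal invariance. When $\mathbf{1}_A \in \mathcal{M}_A$ there is a one-line shortcut: instantiating~\eqref{MARcond} at $m_A = \mathbf{1}_A$ reads $p^*_A(x_A \mid M_A = m') = p^*_A(x_A)$ for every admissible $m'$, which trivially yields the marginal equality. Once the two densities have been matched, each KL term in the reformulation vanishes and hence $S^{*}_{NA}(P^*,P) = 0 \geq S^{*}_{NA}(H,P)$ for every $H \in \mathcal{H}_P$, which is exactly propriety.
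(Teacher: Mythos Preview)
Your proof has a genuine gap: you try to show $S^{*}_{NA}(P^*,P)=0$, but this is \emph{false} in general under the MAR assumption~\eqref{MARcond}. Concretely, MAR does not force the marginal of the observed part to be pattern-independent: take $A=\{1,2\}$, $\mathcal{M}_A=\{(0,0),(0,1)\}$, and let $X_2$ be missing with probability $g(X_1)$ for some non-constant $g$. Then $p^*(x_2\mid x_1,M_A=m')=p^*(x_2\mid x_1)$ for both $m'\in\mathcal{M}_A$, so~\eqref{MARcond} holds, yet $p^*(x_1\mid M_A=(0,1))\propto g(x_1)p^*(x_1)\neq p^*(x_1\mid M_A=(0,0))$. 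Your two proposed repairs both fail here: the complementary pattern $(1,0)$ is not in $\mathcal{M}_A$, so~\eqref{MARcond} cannot be invoked with swapped roles; and likewise $\mathbf{1}_A=(1,1)\notin\mathcal{M}_A$, so the shortcut is unavailable. More generally, condition~\eqref{MARcond} is stated only for $m_A,m'_A\in\mathcal{M}_A$, so you cannot freely choose the partition-defining pattern outside the support of $P^M_A$.

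The paper's argument avoids this obstacle by \emph{not} trying to push the KL all the way to zero. Instead it decomposes, for each fixed $A$ and $m_A$,
\[
D_{KL}\bigl(h_{m_A}\,\big\|\,p_A(\cdot\mid M_A=\mathbf{0})\bigr)
=\E_{o\sim p^*(\cdot\mid M=m_A)}\!\Bigl[D_{KL}\bigl(h(\cdot\mid o,M=m_A)\,\big\|\,p^*(\cdot\mid o)\bigr)\Bigr]
+D_{KL}\bigl(p^*(o\mid M=m_A)\,\big\|\,p^*(o\mid M=\mathbf{0})\bigr),
\]
using $H\in\mathcal{H}_P$ for the factorisation of $h_{m_A}$ and~\eqref{MARcond} with $m'_A=\mathbf{0}$ for the factorisation of $p_A(\cdot\mid M_A=\mathbf{0})$. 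The second summand is the irreducible part and does not depend on $H$; the first is non-negative and vanishes at $H=P^*$ because~\eqref{MARcond} with $m'_A=m_A$ gives $p^*(o^c\mid o,M=m_A)=p^*(o^c\mid o)$. This yields the required inequality $D_{KL}(h_{m_A}\|\cdot)\geq D_{KL}(p^*_{m_A}\|\cdot)$ without ever needing the marginal invariance you were after.
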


The proof is given in the Appendix.

Condition \eqref{MARcond} is simply the MAR condition on the projection $A$, see e.g., \citet{whatismeant}.\footnote{Lemma \ref{MARcond} in the Appendix shows that condition \eqref{MARcond} is indeed equivalent to the MAR condition in \citet{whatismeant}.} The key insight is that for any imputation distribution $H \in \mathcal{H}_P$ and $m \in \mathcal{M}$ it holds that
\begin{align}\label{hfactorization}
 h_m(x)&=h(o^c(x, m)| o(x, m), M=m) p^*(o(x,m)| M=m),
\end{align}
by the definition of $\mathcal{H}_P$ in \eqref{imputationdistset}. This can be used to show that the score in \eqref{drscore} factors into (i) an \emph{irreducible part}, stemming from the difference in the observed parts $p^*(o(x,m)| M=m)$ and $p(o(x,m)| M=\mathbf{0})$, and (ii) a score for the distance of the conditional distributions. The latter is minimized for $H=P^*$.


Thus Proposition \ref{amazingproposition} shows that the proposed I-Score is proper as long as the missingness mechanism is MAR on each projection $A \in \mathcal{A}$. In particular, this holds if
\begin{itemize}
    \item[(i)] the missingness mechanism is MCAR\footnote{Flexible nonparametric methods to test MCAR were developed recently, see e.g., \citet{Li2015, 2021pklm}.},
    \item[(ii)] the missingness mechanism is MAR and $\mathcal{A}=\{1,\ldots,p \}$,
    \item[(iii)] it is known that blocks of data are jointly MAR, and the set of projections $\mathcal{A}$ is chosen such that the blocks are contained as a whole in the projections.
\end{itemize}

As will be discussed in Section \ref{pracaspects}, we face a trade-off in practice: The method tends to have more power when allowing for smaller projections while this increases the chance of having a lot of projections violating \eqref{MARcond}, which may hurt the propriety of the score. Nonetheless, in the empirical validation (Section 7) we do not find evidence that our score violates propriety, even when using random projections without any verification of MAR in the projections.

While our DR I-Score uses a KL-based measurement of the difference between the true distribution and imputation distributions, it would be interesting to look into alternatives such as the multivariate generalization of the integrated quadratic
distance of \cite{Scoresforclimate}, based on the energy score of \cite{Gneiting2008}, as suggested by a referee.

\subsection{Assessing a Density Ratio through Classification} \label{sec:drclass}



We assess the density ratio of the proposed DR I-Score \eqref{drscore} by classification, as e.g. in \cite{breiman2003} or \cite{Cal2020}. Given a projection $A$ and a pattern $m_A$, we define for $x_A$ in the support of $H_{m_A}$,
 \begin{equation}
         \pi_{m_A}(x_A):= \frac{p_A(x_A \mid M_A=\mathbf{0})}{p_A(x_A \mid M_A=\mathbf{0}) + h_{m_A}(x_A)}.  \label{def_pA}
\end{equation}
It then follows that we can rewrite the density ratio in (\ref{drscore}) as
\begin{equation}
   \frac{p_A(x_A \mid M_A=\mathbf{0})}{h_{m_A}(x_A)} = \frac{\pi_{m_A}(x_A)}{1-\pi_{m_A}(x_A)}, \label{drformula} 
\end{equation}
leaving us with the problem of how to obtain $ \pi_{m_A}(x_A)$. Crucially, this term can be assessed through a classifier. To see the link between the definition \eqref{def_pA} and classification we introduce further notation. Let $\mathcal{S}^{P}$ be an i.i.d.~sample from $P$. We denote by $\mathcal{S}^{P}_A$ the subset of observations in $\mathcal{S}^{P}$ that is complete on A. Thus, $\mathcal{S}^{P}_A$ may be seen as an i.i.d.~draw from the density $p(\cdot|M_A=\mathbf{0})$. Similarly, we denote by $\mathcal{S}^{H}_{m_A}$ all observations that originally had missingness pattern $m_A$, but were imputed with $H$. We introduce the binary class label $Y_{m_A}$ indicating membership of an observation to the sample $\mathcal{S}^{P}_A$ by $Y_{m_A}=1$ and membership to the sample $\mathcal{S}^{H}_{m_A}$ by $Y_{m_A}=0$. We then define
\begin{align}
      \pi^{Y_{m_A}}(x_A) &:= \Prob(Y_{m_A}=1 \mid X_A=x_A), \label{probclass} \\
      \pi_{m_A} & := \Prob(Y_{m_A}=1), \notag
 \end{align} where $ \pi_{m_A} $ simply denotes the probability of class $1$ in a given projection and pattern.
 
\begin{lemma} \label{lemmadrclass}
Let $\pi_{m_A}(x_A),\pi^{Y_{m_A}}(x_A)$ and $\pi_{m_A}$ be defined as in \eqref{def_pA} and \eqref{probclass} respectively. If $\pi_{m_A}= 0.5$, then
$$\pi_{m_A}(x_A) = \pi^{Y_{m_A}}(x_A).$$
\end{lemma}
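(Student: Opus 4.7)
The plan is to reduce the statement to a one-line Bayes-rule computation, since by construction the two underlying samples are realizations of the densities appearing in \eqref{def_pA}.

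First, I would identify the class-conditional densities. By construction, an observation in $\mathcal{S}^{P}_{A}$ is a draw from $p_A(\cdot \mid M_A=\mathbf{0})$, so the conditional density of $X_A$ given $Y_{m_A}=1$ is $p_A(x_A \mid M_A=\mathbf{0})$. Similarly, an observation in $\mathcal{S}^{H}_{m_A}$ is a draw from $h_{m_A}$, so the conditional density of $X_A$ given $Y_{m_A}=0$ is $h_{m_A}(x_A)$.

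Next, I would apply Bayes' theorem to the binary classification problem. With prior $\Prob(Y_{m_A}=1)=\pi_{m_A}$ and the class-conditional densities identified above, the total marginal density of $X_A$ is $\pi_{m_A}\, p_A(x_A\mid M_A=\mathbf{0}) + (1-\pi_{m_A})\, h_{m_A}(x_A)$, and Bayes' rule gives
\begin{equation*}
\pi^{Y_{m_A}}(x_A) \;=\; \frac{\pi_{m_A}\, p_A(x_A\mid M_A=\mathbf{0})}{\pi_{m_A}\, p_A(x_A\mid M_A=\mathbf{0}) + (1-\pi_{m_A})\, h_{m_A}(x_A)}.
\end{equation*}

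Finally, plugging in $\pi_{m_A}=0.5$ makes the prior factors cancel from numerator and denominator, leaving exactly $\pi_{m_A}(x_A)$ as defined in \eqref{def_pA}. There is no real obstacle here; the only point worth stating carefully is that the two empirical samples $\mathcal{S}^{P}_A$ and $\mathcal{S}^{H}_{m_A}$ are draws from the correct conditional laws, which follows directly from the definitions preceding the lemma.
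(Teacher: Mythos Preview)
Your proposal is correct and follows essentially the same approach as the paper's own proof: identify the class-conditional densities with $p_A(\cdot\mid M_A=\mathbf{0})$ and $h_{m_A}$, apply Bayes' rule, and cancel the prior factors under $\pi_{m_A}=0.5$. The only cosmetic difference is that you keep the $(1-\pi_{m_A})$ explicit before substituting, whereas the paper effectively replaces $1-\pi_{m_A}$ by $\pi_{m_A}$ directly in the denominator.
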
 
 
The proof is given in the Appendix and simply uses Bayes Formula. 

Lemma \ref{lemmadrclass} shows that one can access the density ratio \eqref{drformula} through an estimate of the posterior probability $\pi^{Y_{m_A}}(x_A)$. This has a direct connection to classification, as for $\pi_{m_A}= 0.5$, the Bayes classifier with minimal error for this problem is given by $\Ind\{\pi^{Y_{m_A}}(x_A)>1/2\}$ \citep{ptpr}.
In practice, we ensure $\pi_{m_A}= 0.5$ in each projection $A$ and pattern $m_A$ by upsampling the smaller class to the size of the larger class, as detailed in the Appendix.




\section{Practical Aspects}\label{pracaspects}

In practice, we take several steps to estimate $S_{NA}^{*}(H,P)$ based on samples only. For a missingness pattern $m \in \mathcal{M}$, let
$\widetilde{\mathcal{A}}_m$ be a set of random projections sampled from $\mathcal{A}$. Note that we sample the set of random projections dependent on the missingness pattern $m$, as specified in Section \ref{distributionkappa}. Let furthermore $\mathcal{N}_{m_A}$ be the set of indices $i$ such that $x_{i,A}$ has missingness pattern $m_A$ whose posterior probability $\pi_{m_A}(x_A)$ in \eqref{def_pA} is estimated by $\hat{\pi}_{m_A}(x_A)$. Given an imputation method with $N \geq 1$  imputed values $x_i^1, \ldots, x_i^{N}$ of the incomplete observations, the estimator of $ S_{NA}^{*}(H,P)$ is given by
\begin{align} \label{scoreestimator}
   &\widehat S_{NA}^{*}(H,P):=\frac{1}{N}\sum_{j=1}^{N} \frac{1}{|\mathcal{M}|} \sum_{m \in \mathcal{M}}  \frac{1}{|\widetilde{\mathcal{A}}_m|}\sum_{A \in \widetilde{ \mathcal{A}}_m}  \frac{1}{|\mathcal{N}_{m_A}|}\sum_{i \in \mathcal{N}_{m_A}} \left[ \log\left( \frac{ \hat \pi_{m_A}(x_{i,A}^j)}{1- \hat \pi_{m_A}(x_{i,A}^j)} \right)\right],
\end{align}
yielding a score of the imputation performance of $H$, averaged over $N \geq 1$ imputations. 

For each projection $A$ and pattern $m_A$, we first split the data into a training and test set. We make sure to have observations with pattern $m_A$ in both the training and test set. We then fit $\hat{\pi}_{m_A}$ on the training set and evaluate it on the test set.  We use both halfs of the sample once for training and once for evaluation, to ensure that every observation contributes to the final score (\ref{scoreestimator}).

Algorithm 1 in the Appendix summarizes the practical estimation of the DR I-Score and gives more details. We now highlight a few more key concepts for the estimator.

\subsection{Random Forest Classifier}

To estimate $\pi_{m_A}(x_{i,A}^j)$ we use a classifier, as detailed in Section \ref{sec:drclass}.
Our classifier of choice is Random Forest and more specifically, the probability forest of \citet{probabilityforests}. That is, for each of the $\texttt{num.proj}$ projections in $\tilde{\mathcal{A}}_m$, we fit a Random Forest with a small number of trees (say between $5$ and $20$), a parameter called $\texttt{num.trees.per.proj}$. As such, the overall approach might be seen as one aggregated Random Forest, which restricts the variables in each tree or group of trees to a random subset of variables. This seems natural considering the construction of the RF. In each tree we set $\texttt{mtry}$ to the full dimension of the projection to avoid an additional subsampling effect. Despite the natural fit of our framework into the Random Forest construction, any other classifier may be chosen to obtain an estimate of $\pi_{m_A}$.\footnote{We also experimented with other classifiers such as generalized linear model (glm), but in practice the resulting score did not have a lot of power discriminating different imputations.}




\subsection{Distribution over Projections and Patterns} \label{distributionkappa}
The question remains how to choose the set of projections $A \sim \mathcal{K}$ from which we sample the subset $\tilde{\mathcal{A}}$ at random. We group the samples according to their missingness pattern and for each of the groups we sample \texttt{num.proj} many projections from $\mathcal{A}$, that we adapt to the given pattern.

Let $O^c_m \subseteq \{1,\ldots,d \}$ be the index set of variables with a missing value, such that $o^c(x_i,m)=x_{i,O_m^c}$ and similarly $O_m=\{1,\ldots,d \} \setminus O_m^c$ the index set of variables without a missing value. Given a missingness pattern $m$ of a group of samples, we choose $\mathcal{A}=\mathcal{A}_{m}$ as the set of subsets $A$ that satisfy $A \cap O_m^c \neq \emptyset $ and $A \setminus O_m^c \neq \emptyset $, i.e., a subset of missing indices has to be part of the projection but there must be at least one element in $A$ that is not part of $O^c_m$. The intuition behind this choice is the following: Since we want to compare a sample of projected imputed observations of a pattern to a sample of projected complete observations, we need to ensure that these samples are not the same. This is the reason for including in each projection at least one index $j$ such that $m_j=1$, i.e. $x_j$ was missing in the given pattern.


In practice, we select at random a subset $ \tilde{\mathcal{A}}_{m}$ of $\mathcal{A}_m$. To obtain each $A \in \tilde{\mathcal{A}}_{m}$ we first sample a number $r_1$ in $\{1,\ldots, |O^c_m| \}$ and a number $r_2$ in $\{1,\ldots, d-r_1 \}$. Then $A$ is obtained as the union of a random subset of size $r_1$ from $O^c_m$ and a random subset of size $r_2$ from $O_m$.

Having obtained $A \in \tilde{\mathcal{A}}_m$, we choose the support of the projected pattern to be a singleton, $\mathcal{M}_A=\{m_A\}$, with
\begin{align*}
    (m_A)_j=\begin{cases} 1, & \text{ if } j \in A\cap O_m^c \\
   0, & \text{ else. }  \end{cases}
\end{align*}
That is the pattern $m_A$ is simply the pattern $m$ projected to $A$. We thereby ensure that on the projection $A$ the training set on which the classifier is trained contains the same pattern as the test points. 

\subsection{Approximate Confidence Intervals} \label{section:jackknife}

We estimate the variance of our score, if the data would vary, by a jackknife approach as in \citet{generaljackknife}. We divide $\mathbf{X}$ randomly into two parts and compute the DR I-Score for a given imputation method for each part, obtaining $S^{(1)}$ and $S^{(2)}$. This is repeated $B$ times to obtain scores $S_1^{(1)}, \ldots, S_B^{(1)}$ and $S_1^{(2)}, \ldots, S_B^{(2)}$. Let $\bar{S}_{j}= 1/2 ( S_j^{(1)} + S_j^{(2)})$ and let $\widehat{S}$ be the score of the original data set for a given imputation method. We estimate the variance according to the formula of \citet{generaljackknife}, as
\begin{align}
    \widehat{\mbox{Var}}(\widehat{S})= \frac{1}{B} \left(  \sum_{j=1}^B \left( \bar{S}_{j} - \frac{1}{B} \sum_{j=1}^B \bar{S}_{j} \right)^2 \right). \label{formula:varest}
\end{align}

The approximate $(1-\alpha)$-Confidence Interval for our score is then given as 
\[\hat{S} \pm  q_{1-\alpha/2} \cdot \sqrt{\widehat{\mbox{Var}}(\widehat{S})},\] 
where $q_{1-\alpha/2}$ is the $(1-\alpha/2)$-quantile of a standard normal distribution. We  choose $\alpha = 0.05$ as default level. As the normality of the score is not guaranteed, a more careful approach would instead try to estimate the quantiles directly in this manner, e.g., using subagging \citep{buhlmann_yu}. While this is possible, it is computationally intense as the number of repetitions has to be high to obtain an accurate estimate of the quantiles. In contrast, the variance appears easier to approximate and simulations indicate that the estimate of the variance is reasonable. 



\section{Further Related Work}\label{relwork}

There are essentially two strands of literature related to our paper. The first concerns efforts to score imputation methods, as we do in this work. To the best of our knowledge, the research area of scoring and selecting imputation methods has not gained a lot of attention, especially under the more realistic assumption that the true data, or even a simulated ground truth, are not available for comparison. In \cite{ross} the authors state that the performance of an imputation should ``preserve the natural relationship between variables in a multivariate data set (...)". The methods they use to assess these properties include comparing densities before and after imputation and in the classification case comparing ROC curves. The authors of \cite{luengo} consider the scoring of imputation methods with respect to classification tasks. They build a score based on pairwise comparisons of classification performances of two imputation methods using Wilcoxon Tests. Again, this procedure requires the knowledge of the underlying true values.

The second line of literature concerns methods that have very different goals, but are methodologically similar to what we propose: we make use of a classifier able to handle missing values to discriminate between imputed and real data. The key idea is to use projections in the variable space together with a Random Forest (RF) \citep{Breiman2001} classifier. As such, our method is probably most closely related to the unsupervised Random Forest approach originally proposed in \citet{Breimannunsupervised} and further developed in \citet{unsupervisedRF1}. The latter uses an adversarial distribution and a RF classification to achieve a clustering in an unsupervised learning setting. We take the same approach in a different context, whereby our adversarial distribution is the imputation distribution. The classifier approach also has some connections to the famous General Adversarial Network (GAN) \citep{GAN} and the GAIN imputation method \citep{Gain} that extends the GAN approach to obtain an imputation method. Aside from the fact that instead of imputation we are concerned about evaluation, there are further differences: First, their discriminator is trained to predict the missingness pattern, while we directly compare imputed and real data. Second, compared to a fully-fledged optimized GAN, our approach based on Random Forest is simple and does not require any backpropagation or tuning. 
Finally, our approach of obtaining an estimate of the KL divergence as a ratio of estimated class probabilities was introduced in \citet{Cal2020}, who used it to construct hypothesis tests.


As we use projections in the variable space as a way of adapting a Random Forest to work with missing values, our approach is also connected to the literature of CART and Random Forest algorithms that can handle missing values. We cite and briefly summarize some of the different approaches in the literature. \cite{beaulac2018best} propose an adapted CART algorithm, called Branch Exclusive Splits Trees (BEST), where some predictors are available to split upon only within some regions of the predictor space defined according to other predictors. This structure on the variables needs to exist and be imposed by the researcher. If blind to such a structure, an easy cure is distribution-based imputation (DBI) by \cite{quinlan}. When selecting a predictor and split point, only observations with no missing value in this predictor are considered. An observation with a missing value in the variable of the split is randomized to left and right according to the distribution of the observations that have no missing value in this variable. \cite{breimanBook} proposed the approach of so-called surrogate variables. Again, only observations with no missing value are considered when choosing a predictor and split point. After choosing a best (primary) predictor and split point, a list of surrogate predictors and split points is formed. The first surrogate mimics best the split of the primary split, the second surrogate is second best and so on. This approach makes use of the correlation between the variables. There are also approaches that require fully observed training data, but are able to deal with missing values in prediction, such as \cite{saar} or \cite{nodeharvest}, but these are less relevant in our context.

\section{Empirical Validation}\label{empresults_1}
This section presents an empirical study of the performance of the DR I-Score. We do not aspire to perform an extensive comparison of state-of-the-art imputation methods, but instead employ the different imputations as a tool to validate the DR I-Score. As such, we chose commonly used imputation methods that are easily usable in \textsf{R}. We investigate whether there is evidence against propriety of the DR I-Score and assess the alignment of the ranking induced by the DR I-Score with desired distributional properties. We first list the $9$ imputation methods and $15$ real world data sets used, covering a range of numbers of observations $n$ as well as numbers of variables $d$. 

\subsection{Imputation Methods} \label{sec:imputationMethods}
We employed the following prevalent single and multiple imputation methods available in \textsf{R}, that can be divided into mice methods (``mice-cart'' , ``mice-pmm'', ``mice-midastouch'', ``mice-rf'', ``mice-norm.predict'') and others (``mipca'', ``sample'', ``missForest'', ``mean'').

All methods have in common that they are applicable to the selected data sets without indication of errors or severe warnings. For each method with prefix ``mice'' we used the \textsf{R}-package \texttt{mice} \citep{mice}. If a method required specification of parameters, we used the default values. A more detailed description of the methods used can be found in Appendix \ref{impmethodsdetail}.

\subsection{Data Sets}
We used the real data sets specified in Table \ref{Tab:data sets} for the assessment of the DR I-Score. They are available in the UCI machine learning repository\footnote{\url{https://archive.ics.uci.edu/ml/index.php}}, except for Boston (accessible in \textsf{R}-package \texttt{MASS}) and CASchools (accessible in \textsf{R}-package \texttt{AER}). We preprocessed the data by cancelling factor variables, in order to be able to run all the assessed imputation methods without errors. However, we kept numerical variables with only few unique values. This preprocessing was done solely for the imputation methods, our proposed score could be used with factor variables as well. Finally, in the data set ecoli we deleted two variables because of multicollinearity issues. 


\begin{table}
\begin{center}
\begin{tabular}{|c|c c|}
\hline
data set            & n & d \\[0.5ex] 
\hline \hline
airfoil                     & 1503       & 6                \\
Boston                      & 506        & 14            \\
CASchools                   & 420        & 10              \\
climate.model.crashes     & 540        & 19                  \\
concrete.compression       & 1030       & 9                   \\
concrete.slump             & 103        & 10                \\
connectionist.bench.vowel & 990        & 10                 \\
ecoli                       & 336        & 5           \\
ionosphere                  & 351        & 32                 \\
iris                        & 150        & 4                 \\
planning.relax             & 182        & 12                \\
seeds                       & 210        & 7                         \\
wine                        & 178        & 13            \\
yacht                       & 308        & 7               \\
yeast                       & 1484       & 8       \\
\hline
\end{tabular}
\end{center}
\caption{Data sets used for performance assessment of the DR I-Score with number of observations $n$ and number of variables $d$.}
\label{Tab:data sets}
\end{table}

\subsection{Propriety of DR I-Score} \label{sec:propriety}


In this section we check empirically whether any imputation is ranked significantly higher than the true underlying data. This is an attempt to empirically assess if the DR I-Score is proper and a lack of significance might also indicate an insufficient sample size to detect a violation. In addition, we can check how well each method performed on each data set with respect to the DR I-Score. To test empirical propriety of the score, that is the non-inferiority of the true data score, we score the fully observed data set by $ \widehat S^{*}_{NA}(P^*,P)$ and the imputed data set by $\widehat S^{*}_{NA}(H,P)$ for each imputation distribution $H$ and consider the difference $D_{H} := \widehat S^{*}_{NA}(H,P) -  \widehat S^{*}_{NA}(P^*,P)$.
We want to test the following hypotheses for all $H$:
\begin{align}
    H_0: D_H = 0 \ \text{vs} \ H_A: D_H > 0. \label{test1}
\end{align}
We do this by $p$-value calculation assuming that approximately
\begin{equation}
    D_H \overset{H_0}{\sim} \mathcal{N}(0, \sigma^2(D_H)), \label{ass:distrH0}
\end{equation}
where we estimate $\sigma(D_H)$ with the Jackknife variance estimator formula (\ref{formula:varest}) using $30$ times $1/2$-subsampling. Details of the whole simulation are given in the Appendix.

We fixed the overall fraction of missingness to $p_{miss}=0.2$ (Results for $p_{\text{miss}}= 0.1$ can be found in the Appendix). When setting data to \texttt{NA}, we applied the MCAR and the MAR mechanism: In MCAR, we set each value in the data set to missing with probability $p_{miss}$. In MAR we created $d/2$ random missingness patterns $m$ for a data set of dimension $d$. Afterwards, we used the ``ampute'' function of the package \texttt{mice} with these patterns, where all patterns have the same frequency, to create a MAR data set. In the MAR case, $p_{\text{miss}}$ might slightly deviate from $0.2$. 

As parameters for the DR I-Score estimation we chose the number of trees per projection (\texttt{num.trees.per.proj}) to be $5$  and the minimal node size in each tree to be $10$ (the default for a probability RF). We chose the number of projections (\texttt{num.proj}) adaptively to the dimension $d$ of the data set: for $d \leq 6$ we used $50$, for $7 \leq d \leq 14$ we used $100$ and for $d \geq 15$ we used $200$. We set the number of imputations to $m=5$ (the default value in the \textsf{R}-package \texttt{mice}). 

We generated a realization of the \texttt{NA}-mask for the MCAR and the MAR case and then computed for each method/data set combination the corresponding $p$-value of testing \eqref{test1} under assumption \eqref{ass:distrH0}. All methods were computable on all data sets, without throwing errors or major warnings, except for mice-midastouch on yeast, indicated with an \texttt{NA} in Figure \ref{fig:pvals}. Our findings for testing propriety can be summarized as follows: At level $\alpha = 0.05$ we found no single significant $p$-value in the MCAR or in the MAR case. At level $\alpha = 0.1$ we found in the MAR case two significant $p$-values for mice-rf in the data sets yeast and concrete.slump and in the MCAR case one significant $p$-value for mice-cart in ionosphere. The latter data set has the highest dimension of $32$, which for MCAR leads to the extreme case of only observing each missingness pattern once. In summary, these results do not reveal enough evidence against the propriety of our estimated DR I-Score. We point out that, in the MAR case, we did not verify the MAR assumption on the projections.
\begin{figure}
     \centering
     \begin{subfigure}[b]{0.48\textwidth}
         \centering
         \includegraphics[width=1\textwidth]{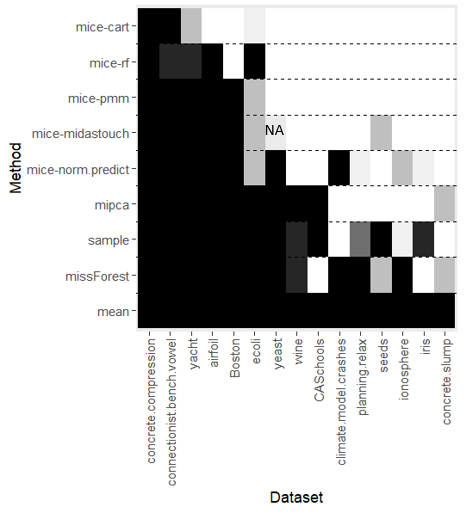}
              \caption{MAR}
     \end{subfigure}
     \begin{subfigure}[b]{0.48\textwidth}
         \centering
           \includegraphics[width=1.2\textwidth]{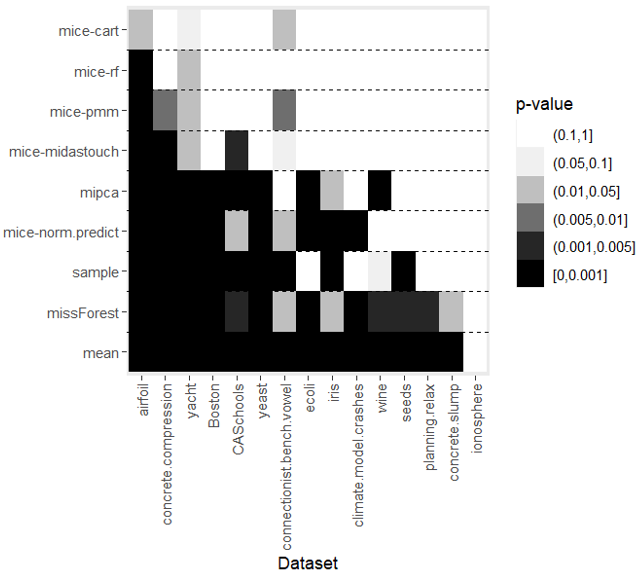}
         \caption{MCAR}
     \end{subfigure}
   \caption{Discretized $p$-values of testing \eqref{test2} under assumption \eqref{ass:distrH0} for the $9$ methods applied to the $15$ data sets. We used missingness mechanisms MAR (a) and MCAR (b), $p_{miss}=0.2$ and $m=5$. The parameter values of the DR I-Score are described in the text. ``NA'' means that the method was not computable on the respective data set.}
\label{fig:pvals}

\end{figure}

Reversing the alternative hypothesis we obtain,
\begin{align}
    H_0: D_H = 0 \ \text{vs} \ H_A: D_H < 0, \label{test2}
\end{align}
whose corresponding test may reveal more information about the performance of the methods as well as the difficulty for imputation of each data set. We assume \eqref{ass:distrH0} and calculate the corresponding $p$-values for \eqref{test2}. In Figure~\ref{fig:pvals}, we discretized each $p$-value corresponding to a method applied to a data set into one of $6$ batches, which reflect different significance levels. The larger the $p$-value, the lighter the shade, the better the respective method imputed on the given data set with respect to our score. With a $p$-value in the batch $(0.1, 1]$ (white) we can not reject the null that $D_H=0$ even at the level $0.1$, i.e., the imputation performed as well as the true data. We sorted the rows and columns to cluster similarly scored methods and data sets together.

Reading the plot row-wise reveals performance information of the methods, the higher up a method appears the better it performed according to the DR I-Score. Interestingly, the MAR and MCAR case reveal almost exactly the same ordering of the methods: Only ranks of mice-norm.predict and mipca are flipped, however these two reveal very similar $p$-value patterns in both plots. If we divide the methods into two groups, we observe in the better group the mice methods (mice-cart, mice-rf, mice-pmm, mice-midastouch). The best method overall in both cases is mice-cart, whose imputations were indistinguishable from the true data in $11$ data sets in the MAR case and $12$ in the MCAR case, even at level $0.1$. In addition, we want to emphasize the suboptimal performance of methods that predict conditional means, without additional randomization to impute values, in particular missForest and mean. This may be surprising as missForest is known to perform very well in the literature, see e.g. \citet{Waljee2013}. However this impression of good performance is based on measures of accurary, such as RMSE. As laid out in \citet{VANBUUREN2018}, as a prediction method, missForest does not account for the uncertainty caused by the missing data. Contrary to accuracy measures, our score takes the joint distribution into account when assessing performance, hence the comparatively weak performance of missForest over the chosen data sets. The worst method appears to be the mean, a method known to heavily distort the distribution.

Reading the plot column-wise reveals information about how easy or hard the data sets could be imputed by the methods considered. The further to the right a data set appears, the easier it was to impute. For instance, we find that none of the considered methods was able to find an imputation that recovers the joint distribution well enough for the data set concrete.compression (MAR) and airfoil (MCAR). 


\subsection{Relevancy of DR I-Score} \label{sec:relevancy}

In the last section, we did not discover evidence against the propriety of the estimated DR I-Score. However, a practitioner might want to select the best method with the score, or at least determine the worst performing methods to definitely not employ them. Unfortunately, there is no ground truth to compare to. However, one would hope that the methods chosen by our score perform well on a wide range of targets, even though it was not designed to select for any of these targets specifically. We focus on a target that presumably will be of interest in practice when doing multiple imputation: Average coverage and average width of marginal confidence intervals for each \texttt{NA} value, obtained by the $m$ multiple imputations. More specifically, let $\mathcal{N} \subseteq \{1,\ldots, n\}$ be the set of $i$ for which $\sum_{j=1}^p \mathbf{M}_{ij} >0$, i.e., those observations that contain at least one missing value. For each observation $x_i$, let $\mathcal{N}_i \subset \{1,\ldots, p\}$ be the missing coordinates of $x_i$. Given an imputation method $H$, we obtain for each $x_{ij}$, $i \in \mathcal{N}$, $j \in \mathcal{N}_i$, $m$  imputed values $x_{ij}^1(H), \ldots, x_{ij}^m(H)$. For $m$ large enough, we compute the empirical $0.025$- and $0.975$-quantiles $x_{ij}^{0.025}(H)$ and $x_{ij}^{0.975}(H)$ and consider the interval spanned in between as the empirical $95$\%-CI for $x_{ij}$ defined by the method $H$ through $x_{ij}^1(H), \ldots, x_{ij}^m(H)$. We denote it by 
\begin{equation*}
    \text{CI}( x_{ij}^1(H), \ldots, x_{ij}^m(H)):= [x_{ij}^{0.025}(H), x_{ij}^{0.975}(H)].
\end{equation*}
We then check whether the true missing data point $x_{ij}$ lies within the CI or not and obtain an average marginal coverage for the method $H$ by averaging over all $i \in \mathcal{N}$, i.e., 
\begin{equation*}
    \text{Coverage}(H):= \frac{1}{\mathcal{N}}\sum_{i\in \mathcal{N}}\frac{1}{\mathcal{N}_i}\sum_{j\in \mathcal{N}_i}\Ind\{x_{ij} \in \text{CI}( x_{ij}^1(H), \ldots, x_{ij}^m(H))\}.
\end{equation*}

\begin{figure}
     \centering
     \begin{subfigure}[b]{0.49\textwidth}
         \centering
         \includegraphics[width=1.1\textwidth]{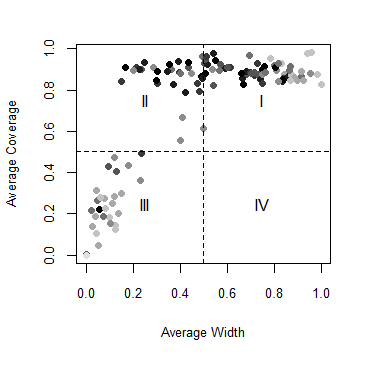}
         \vspace*{-1.2cm}
              \caption{MAR}
     \end{subfigure}
     \begin{subfigure}[b]{0.49\textwidth}
         \centering
           \includegraphics[width=1.1\textwidth]{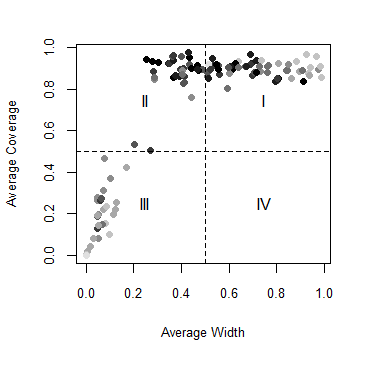}
           \vspace*{-1.2cm}
         \caption{MCAR}
     \end{subfigure}
 \caption{Average coverage plotted against average width for the $9$ methods applied to the $15$ data sets (total = $9\times15=135$ points). The darkness indicates the rank induced by the DR I-Score (the darker, the higher the rank). We used the missingness mechanism MAR in (a) and MCAR in (b) with $p_{miss}=0.2$, $m=20$ and the in the text described parameter values to compute the DR I-Scores.}
  \label{fig:cov}
\end{figure}

A method that has a large enough variation between the $m$ different imputations will reach a coverage of $1$, however the average width of its corresponding CI is large, indicating very little power. We use the average width of the confidence intervals as an indicator of statistical efficiency. A good imputation method will have small average width while still maintaining high average coverage. We define the average marginal width for the method $H$ by
\begin{align*}
     \text{Width}(H):= \frac{1}{\mathcal{N}}\sum_{i\in \mathcal{N}}\frac{1}{\mathcal{N}_i}\sum_{j\in \mathcal{N}_i} \left( x_{ij}^{0.975}(H)-x_{ij}^{0.025}(H)\right).
\end{align*}
For better visualization we constrain the $\text{Width}(H)$ for all methods $H$ to the interval $[0,1]$: Given a data set, we normalize $\text{Width}(H)$ for all methods $H$ by the maximal width of all methods. 


In Figure~\ref{fig:cov} we plot $\text{Coverage}(H)$ against the normalized $\text{Width}(H)$ for all methods applied to all data sets, leading to totally $15\times 9=135$ points. Not all of them are visible since they can lie on top of each other. For example the method mean always produces the point $(0,0)$ for all $15$ data sets, since there is no variation in the $m$ imputed data sets. The shade of the points reflects the induced rank by the DR I-Score, assigning one of the $9$ ranks to each method in a given data set: rank $1$ to the best scored method (black) up to rank $9$ to the worst scored method (light gray). We set the number of imputations to $m=20$, used missingness mechanism MCAR and MAR and $p_{\text{miss}}=0.2$ and the same parameter setting as in Section \ref{sec:propriety}. For easier interpretation, we name the four quadrants of the square in Figure~\ref{fig:cov} by the letters I-IV. We observe that quadrant IV, corresponding to high average width and low average coverage, does not contain any points, which makes sense. Considering the shade gradient, the DR I-Score seems to indeed often rank points with high average coverage and low average width the best. Clearly, methods that produce small average width combined with small average coverage as well as high average width combined with high average coverage, are ranked low, which is highly desirable.

 \begin{figure}
\begin{minipage}{0.45\textwidth}
\centering
\resizebox{1\textwidth}{!}{
\begin{tabular}{|c|c c  c|}
  \hline
method/quadrant & I    & II    & III \\[0.5ex] 
\hline\hline
cart & 0.47 & 0.53 & 0 \\ 
  pmm & 0.53 & 0.47 & 0 \\ 
  midastouch & 0.67 & 0.33 & 0 \\ 
  rf & 0.73 & 0.27 & 0 \\ 
  mipca & 0.87 & 0.13 & 0 \\ 
  sample & 0.87 & 0.13 & 0 \\ 
  norm.predict & 0 & 0.13 & 0.87 \\ 
  mean & 0 & 0 & 1 \\ 
  missForest & 0 & 0 & 1 \\ 
   \hline
\end{tabular}
}
\subcaption{MAR}
\end{minipage}
\begin{minipage}{0.45\textwidth}
 \resizebox{1\textwidth}{!}{
\begin{tabular}{|c|c c  c|}
  \hline
method/quadrant & I    & II    & III \\[0.5ex] 
\hline\hline
pmm & 0.40 & 0.60 & 0 \\ 
  cart & 0.47 & 0.53 & 0 \\ 
  midastouch & 0.53 & 0.47 & 0 \\ 
  rf & 0.60 & 0.40 & 0 \\ 
  mipca & 0.87 & 0.13 & 0 \\ 
  norm.predict & 0 & 0.13 & 0.87 \\ 
  sample & 0.93 & 0.07 & 0 \\ 
  mean & 0 & 0 & 1 \\ 
  missForest & 0 & 0 & 1 \\ 
   \hline
\end{tabular}
}
\subcaption{MCAR}
\end{minipage}
\captionof{table}{The fraction of times each method appeared in the quadrants I, II and III of Figure \ref{fig:cov} in the MAR case (a) and the MCAR case (b).}
\label{tab:resmethods}
\end{figure}


Table \ref{tab:resmethods} shows a clearer picture of the roles of the methods in these results. We indicate for each method the fraction of times in the data sets it appeared in the quadrants I, II and III in Figure \ref{fig:cov}. We observe that all points making up quadrant III are from the methods mice-norm.predict, mean and missForest, i.e., the methods that use a sort of mean imputation. From a distributional point of view, these methods are often not favourable, which is a property nicely captured by the DR I-Score.

\subsection{Real Data Example: Births Data}
We further illustrate our method on the natality data of $2020$ obtained from the Centers for Disease Control and Prevention (CDC) website.\footnote{\url{https://www.cdc.gov/nchs/data_access/vitalstatsonline.htm}} This data set contains information on $\approx 3.5$ million births in $2020$ in the US. We subsample the data as detailed later and consider $23$ variables, listed in Appendix \ref{app:birth}. The variables include categorical variables, such as race and education of mother and father and the gender of the newborn, as well as continuous variables, such as the age of the parents and the baby's birth weight. 
We consider a subset of imputation methods that can deal with these mixed data: mice-pmm, mice-cart, sample and missForest. We also include the mean as a baseline comparison.

\begin{figure}
     \centering
     \begin{subfigure}[b]{0.48\textwidth}
         \centering
         \includegraphics[width=1.1\textwidth]{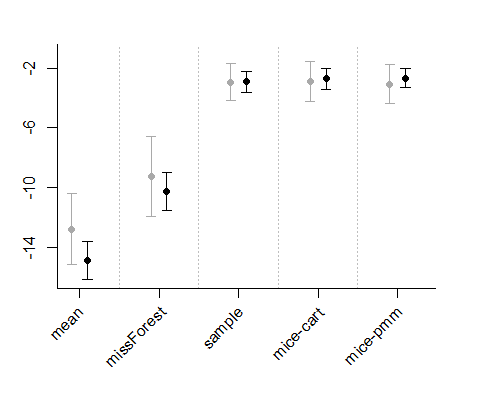}
         \vspace*{-1cm}
              \caption{Estimated DR I-Score with CIs.}
     \end{subfigure}
     \begin{subfigure}[b]{0.48\textwidth}
         \centering
         \includegraphics[width=1.1\textwidth]{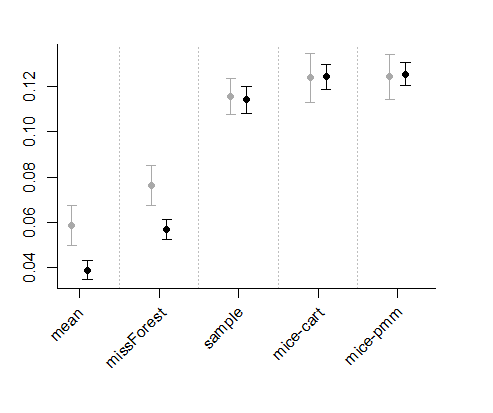}
         \vspace*{-1cm}
         \caption{OOB error with CIs.}
     \end{subfigure}
     \caption{Births data: Estimation of the proposed DR I-Score (a) and the OOB error of the classification RF (b) with corresponding CIs for the methods mean, missForest, sample, mice-cart and mice-pmm with sample size $n=500$ (grey) and $n=2000$ (black). We obtained the CIs by subsampling as described in Section \ref{section:jackknife}.}
\label{fig:birth}
     \end{figure}

We subsample the data to obtain two smaller subsets, one of size $500$ and one of size $2000$, where we sample at random observations with at least one missing value. As such, we obtain in both samples an overall probability of missingness of $p_{miss}\approx 0.08$, which is lower than in the previous sections. Missing values are encountered predominantly in the variables weight of mother, weight gain of mother and BMI of mother, which are correlated, and variables about the father, such as the age, race or education. We estimate the DR I-Score for the aforementioned imputation methods applied to the two data sets (sizes $500$, $2000$) with missing values.

Since the births data set also contains a lot of complete cases, it is possible to validate our score in a special way: We compare each of the imputations with a disjoint sample of only complete cases of the same size that we obtained by randomly sampling complete cases from the whole data set. The comparison is assessed via a classification Random Forest, distinguishing between the sample of complete cases and the imputation, where we report the out-of-bag prediction error (OOB error). We note that in this case, the smaller the OOB error, the easier it was for the classifier to distinguish the imputation from the true data. Hence, the smaller the OOB error, the worse the imputation method. For both the score and the OOB error we also compute $95\%$-CI with the Jackknife variance estimator formula (\ref{formula:varest}) using $30$ times $1/2$-subsampling. We plotted the results in Figure \ref{fig:birth} to compare the ranking of methods obtained by our score with the one obtained by the RF. We ordered the methods according to the mean score/OOB error, computed on the sample of size $2000$.

First, we observe that the rankings obtained from the DR I-Score and the OOB error of the RF are the same. Second, the ranking is in line with the one we obtained in Section \ref{sec:propriety}: mean and missForest appear to be the weakest methods followed by sample, while mice-cart and mice-pmm are ranked best. Third, by increasing the sample size from $500$ (gray) to $2000$ (black), the CIs get more narrow and the methods can be held apart more significantly for both the score and the RF. We observe that the score cannot distinguish sample from the two best methods as clearly as the RF, which would be desirable. However, the RF is allowed to use an additional (large) complete sample, which is naturally an unfair advantage. Moreover, this method is generally not applicable without this large amount of fully observed points. In contrast, the DR I-Score works with only incomplete observations, and still manages to reproduce the same ordering as the RF. 


\section{Discussion}\label{sec:discussion}
In this paper we presented the convenient framework of Imputation Scores to score a given set of imputation methods for an incomplete data set. The widespread assessment of imputation methods via RMSE (of masked observations) favors methods that impute conditional means but do not necessarily reflect the whole conditional distributions. Given the assumption of MAR on each projection, our proposed density ratio I-Score is able to give high scores to methods that replicate the data distribution well.

\newpage
\begin{appendix}

%

\section{Details of Imputation Methods} \label{impmethodsdetail}

\begin{itemize}
    \item[1)] \textbf{missForest} is a multiple imputation method based on iterative use of RF, allowing for continuous and categorical data \citep{stekhoven_missoforest}. After an initial mean-imputation, the variables are sorted according to their amount of missing values, starting with the lowest. For each variable as response, a RF is fitted based on the observed values. The missing values are then predicted with the RF. The imputation procedure is repeated until a stopping criterion is met. We used the \textsf{R}-package \texttt{missForest} \citep{missforest_package}.
    \item[2)] \textbf{mipca} is a multiple imputation method with a PCA model \citep{josse_mipca}. After an initialization step, an EM algorithm with parametric bootstrap is applied to iteratively update the PCA-parameter estimates and draw imputations from the predictive distribution. The algorithm is implemented in the function \texttt{MIPCA} of the \textsf{R}-package \texttt{missMDA} \citep{mipca_package}. We use the function \texttt{estim\_ncpPCA} to estimate the number of dimensions for the principal component analysis by cross-validation.
    \item[3)] \textbf{mean} is the simplest single imputation method considered. It imputes with the mean of the observed cases for numerical predictors and the mode of observed cases for categorical predictors. We use the implementation of the \textsf{R}-package \texttt{mice}.
     \item[4)] \textbf{sample} is a multiple imputation method sampling at random a value of the observed observations in each variable to impute missing values. We use the implementation of the \textsf{R}-package \texttt{mice}.
    \item[5)] \textbf{mice-cart} is a multiple imputation method cycling through the following steps multiple times \citep{mice-rf-method}: After an initial imputation through sampling of the observed values, a classification or regression tree is fitted. For each observation with missing values, the terminal node they end up according to the fitted tree is determined. A random member in this node is selected of which the observed value leads the imputation. 
    \item[6)] \textbf{mice-norm.predict} is a multiple imputation method cycling through the same steps as mice-cart with the adaptation that a linear regression is fitted and its predicted value is used as imputation. 
    \item[7)] \textbf{mice-pmm} Predictive Mean-Matching is a semi-parametric imputation approach (\cite{little_pmm} and \cite{rubin_pmm}). Based on the complete data, a linear regression model is estimated, followed by a parameter update step. Each missing value is filled with the observed value of a donor that is randomly selected among complete observations being close in predicted values to the predicted value of the case containing the missing value. 
    \item[8)] \textbf{mice-midastouch} is a multiple imputation method using an adaption of classical predictive mean-matching, where candidate donors have different probabilities to be drawn \citep{midastouch}. The probability depends on the distance between the donor and the incomplete observation. A closeness parameter is specified adaptively to the data. 
    \item[9)] \textbf{mice-rf} is a multiple imputation method cycling through the same steps as mice-cart with the adaptations that one tree is fitted for every bootstrap sample. For each observation with missing values, the terminal nodes in each tree are determined. A random member of the union of the terminal nodes is selected of which the observed value leads the imputation.
\end{itemize}

\section{Details on Birth Data} \label{app:birth}

Even though the original data contains a lot of variables, we took only the following variables from the source data:
\begin{itemize}
    \item mother's age, height, weight before the pregnancy, weight gain during pregnancy and BMI before pregnancy
    \item mother's race (black, white, asian, NHOPI, AIAN or mixed), marital status (married or unmarried) and the level of education (in total $8$ levels)
    \item father's age, race and level of education
    \item month of birth
    \item plurality of the birth (how many babies were born at once)
    \item whether and when the prenatal care started
    \item pregnancy duration
    \item delivery method (vaginal or C-section)
    \item birth order - the total number of babies born by the same mother (including the current one)
    \item birth interval - number of months passed since last birth (NA if this is the first child)
    \item number of cigarettes smoked per day on average before and during the pregnancy
    \item birth weight (in grams) and gender of the baby
    \item indicators whether baby had any abnormal condition.
\end{itemize}

\section{Algorithm and Implementation Details of DR I-Score Estimation}\label{Implementationdetails}

\begin{algorithm}
\normalsize
\textbf{Inputs}: data set $X$ containing missing values to impute, a multiple imputation method applied to $X$ yielding $N$ imputed data sets $\hat{X}_i$, i=1,\ldots, N\;
\KwResult{DR I-Score $s$ for the imputation method as the average of the $N$ scores $\{s_i\}_{i=1}^{N}$ for each of the $N$ imputed data sets}
 \textbf{Hyper-parameters}: number of projections \texttt{num.proj}, number of trees per projection \texttt{num.trees.per.proj}, standard parameters of the Probability Forests\;
 - Group observations in $X$ into $J$ different groups according to their unique missing value patterns $M_j$, $j=1,\ldots, J$.\;
 \For{$i=1,\ldots,N$}{
   \For{$m=M_1,\ldots,M_J $}{
   - Sample a set of $\texttt{num.proj}$ projections, $A_k$, $k=1,\ldots, \texttt{num.proj}$ compatible with the missingness pattern $m$ as described in Section 5.2\;
   - Get the projected imputed data $\hat{X}_i$ with pattern $m$, and split them in two halves $\hat{X}_i^{0}$ and $\hat{X}_i^{1}$\;
    \For{$l=0,1$}{
    \For{$k=1,\ldots, \texttt{num.proj}$}{
    - Get the complete observations $X^{comp}_{A_k}$ from the projected data $X_{ A_k}$\;
   - Get the projected imputed data $\hat{X}^{l}_{i,A_k}$\;
    - Fit a Probability Forest with $\texttt{num.trees.per.proj}$ trees and $\texttt{mtry}$ full, discriminating $X^{comp}_{A_k}$ from $\hat{X}^{l}_{i,A_k}$ (ensuring a balance of classes, see above for details)\;
   }
   - Form one Probability Random Forest based on the \texttt{num.proj} many random forests\;
  - Get an estimate of the density ratio, $\frac{\hat{\pi}_{m_A}}{1-\hat{\pi}_{m_A}}$ , through this Probability Forest as in Equation (4.7)\;  
   - Compute the individual score contributions, as $\log \frac{\hat{\pi}_{m_A}(x)}{1-\hat{\pi}_{m_A}(x)} $ of the left-out imputations $ x \in \hat{X}^{1-l}_{ i,A_k}$\; 
 }
   - Average for $l=0,1$ the individual score contributions of all points, leading to score $s_{i,m}$\;
 }
 - Average the scores $s_{i,m}$ over all patterns $m$ to get score $s_i$\;
 }
 - Average the score $s_{i}$ over all imputations $i$ to get the final score $s$.
 
 \caption{Algorithm for estimation of DR I-Score}
\end{algorithm}

Here we present additional details of the implementation of the DR I-Score and the full algorithm in pseudo-code.

\hspace*{2cm}

\noindent \textbf{Projection Distribution.} In Section $5.2$ we describe the distribution $\mathcal{K}$ over the projections with restricted support used for the empirical estimation of the DR I-Score. The choice of this distribution is up to the experimenter and can be adapted to the specific patterns of missinginess in a given data set. In particular, the projections can be chosen such that each projection satisfies the MAR assumption, if this can be determined with domain knowledge. \\

\noindent
\textbf{Ensuring Class Balancing. }
We follow a simple procedure to ensure the same number of observations in the training sets $\mathcal{S}_A^P$ and $\mathcal{S}_{m_A}^H$. First if $\mathcal{S}_{m_A}^H$ has fewer elements than $\mathcal{S}_A^P$, but is ``large enough'' relative to $\mathcal{S}_A^P$, we simply upsample $\mathcal{S}_{m_A}^H$ with replacement until it contains the same number of elements as $\mathcal{S}_A^P$. The exact same procedure is applied if $\mathcal{S}_A^P$ has fewer elements than $\mathcal{S}_{m_A}^H$. On the other hand, if the set $\mathcal{S}_{m_A}^H$ is smaller than $\mathcal{S}_{A}^P$, that is if $|\mathcal{S}_{m_A}^H| < \tau |\mathcal{S}_{A}^P|$ for some $\tau \in (0, 1)$, we sample with replacement
observations from other patterns and add them to $\mathcal{S}_{m_A}^H$. We found that $\tau = 0.75$ works well empirically. This is done to ensure that we do not upsample one or two observations. In practice it seems adding additional patterns in the training step of the classifier does not hurt propriety.\\

\noindent    
\textbf{Numerical Truncation.} To avoid numerical issues when calculating the log of the density ratio with Expression (4.7), we apply for each $A$, $m_A$ and $x_A$ the following truncation function to $\hat \pi_{m_A}(x_A)$ 
    \begin{equation*}
        p(x) = \min(\max(x, 10^{-9}), 1-10^{-9}).
    \end{equation*}
Thus, we slightly adapt the predicted probabilities $\hat \pi_{m_A}(x_A)$ that are close to $0$ or $1$, such that the log of (4.7) can be computed. \\

\noindent    
\textbf{Patterns of Size $\mathbf{1}$.} In our algorithm to estimate the DR I-Score, we split the observations of a given missingness pattern $m$ into a training and a test set. We then use the training set to estimate the RF and use it to predict on the test set. Predominantly in the MCAR case, but also potentially in MAR situations, it often happens that several patterns contain only one observation. In this case, we group these patterns of size $1$ together, use all of these samples for training and testing and only fit one RF based on several projections for this new group. This is mainly done due to computational reasons.


\section{Proofs}\label{proofssec}

\begin{proposition}[Restatement of Proposition $4.1$]
Let $H\in \mathcal{H}_{P}$, as defined in (4.1). If for all $A\in \mathcal{A}$,
    \begin{align*}
       &p^*(o^c(x_A,m_A)|o(x_A,m_A), M=m'_A) = p^*(o^c(x_A,m_A)|o(x_A,m_A)), \nonumber \\
       &\text{ for all } m'_A, m_A \in \mathcal{M}_A \tag{4.4},
    \end{align*}
 then $S^{*}_{NA}(H, P)$ in (4.3) is a proper I-Score.
\end{proposition}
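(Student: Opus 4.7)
The plan is to use the factorization of conditional imputation densities provided by the definition of $\mathcal{H}_P$ in~\eqref{imputationdistset}, combined with the MAR condition~\eqref{MARcond} on each projection, to split the log density ratio inside the expectation into (i) an "irreducible" term that does not depend on $H$ and (ii) a conditional KL divergence between $h$ and $p^*$ on the missing coordinates, which is manifestly non-positive and equals zero for $H=P^*$. The whole argument is performed pointwise in $(A, m_A)$, and then integrated against $\mathcal{K} \otimes P_A^M$ to yield the global inequality.

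\textbf{Step 1 (Factorization of $h_{m_A}$).} Fix $A \in \mathcal{A}$ and $m_A \in \mathcal{M}_A$. By the chain rule and the constraint $H \in \mathcal{H}_P$, together with the fact that $P^* \in \mathcal{H}_P$ so $p(o(x_A,m_A)\mid M_A=m_A)=p^*(o(x_A,m_A)\mid M_A=m_A)$, one obtains
\begin{equation*}
h_{m_A}(x_A) \;=\; h(o^c(x_A,m_A)\mid o(x_A,m_A), M_A=m_A)\cdot p(o(x_A,m_A)\mid M_A=m_A),
\end{equation*}
which is the projected analogue of~\eqref{hfactorization}. In parallel, decompose the denominator $p_A(x_A \mid M_A=\mathbf{0})$ via the chain rule conditional on $M_A=\mathbf{0}$ as $p_A(o^c(x_A,m_A)\mid o(x_A,m_A), M_A=\mathbf{0})\cdot p_A(o(x_A,m_A)\mid M_A=\mathbf{0})$. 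Using the MAR assumption~\eqref{MARcond} applied to both $m_A$ and $\mathbf{0}$ in the conditioning, the conditional factor on the missing coordinates equals $p^*(o^c(x_A,m_A)\mid o(x_A,m_A), M_A=m_A)$, which is precisely the conditional factor one would write for $P^*_{m_A}$.

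\textbf{Step 2 (Splitting the log ratio).} Substituting these two factorizations, the log density ratio becomes, writing $o=o(x_A,m_A)$ and $o^c=o^c(x_A,m_A)$ for brevity,
\begin{equation*}
\log\frac{p_A(x_A\mid M_A=\mathbf{0})}{h_{m_A}(x_A)} \;=\; \log\frac{p^*(o^c\mid o, M_A=m_A)}{h(o^c\mid o, M_A=m_A)} \;+\; \log\frac{p_A(o\mid M_A=\mathbf{0})}{p(o\mid M_A=m_A)}.
\end{equation*}
The second summand depends only on $P^*$ and $P$, not on $H$, so it is common to $S^*_{NA}(H,P)$ and $S^*_{NA}(P^*,P)$ and cancels when we take the difference. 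For the first summand, take the inner expectation with respect to $X_A \sim H_{m_A}$. Since under $H_{m_A}$ the marginal of $o$ equals $p(o\mid M_A=m_A)$ and the conditional of $o^c$ given $o$ equals $h(o^c\mid o, M_A=m_A)$, the tower property identifies this inner expectation as
\begin{equation*}
-\,\E_{o \sim p(\cdot\mid M_A=m_A)}\Bigl[D_{KL}\bigl(h(\cdot\mid o, M_A=m_A)\,\big\|\, p^*(\cdot\mid o, M_A=m_A)\bigr)\Bigr] \;\leq\; 0,
\end{equation*}
by non-negativity of the KL divergence, with equality whenever $h(\cdot\mid o, M_A=m_A) = p^*(\cdot\mid o, M_A=m_A)$ almost surely, in particular for $H=P^*$.

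\textbf{Step 3 (Conclusion).} Integrating the pointwise inequality against $A \sim \mathcal{K}$ and $M_A \sim P^M_A$ yields $S^*_{NA}(H,P) - S^*_{NA}(P^*,P) \leq 0$, which is the claimed propriety. The main conceptual subtlety, and the place where the proof can feel delicate, is Step~1: one must correctly exploit~\eqref{MARcond} to "swap" the conditioning event $M_A=\mathbf{0}$ in the numerator for $M_A=m_A$ inside the conditional-on-$o$ factor, while the marginal-in-$o$ factor is left as an irreducible term reflecting the fact that the observed-coordinate distributions under $P$ (conditional on pattern $m_A$) and under $P^*$ (conditional on being fully observed) generally differ and are beyond the reach of any imputation.
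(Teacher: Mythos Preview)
Your proof is correct and follows essentially the same route as the paper: factorize $h_{m_A}$ via~\eqref{hfactorization}, factorize $p_A(\cdot\mid M_A=\mathbf{0})$ using~\eqref{MARcond}, split the log-ratio into an irreducible observed-part term (whose expectation is the same for every $H\in\mathcal{H}_P$ since the marginal of $o(x_A,m_A)$ under $H_{m_A}$ is fixed) and a conditional KL term that is nonpositive and vanishes at $H=P^*$. The only cosmetic difference is that the paper writes the numerator's conditional factor as $p^*(o^c\mid o)$ rather than $p^*(o^c\mid o, M_A=m_A)$, applying~\eqref{MARcond} once instead of twice.
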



\begin{proof}
Since,
\begin{align*}
S_{NA}^{*}(H,P) = - \E_{A \sim  \mathcal{K}, M_A \sim P_A^M}D_{KL}(h_{M_A} \mid \mid p_A(\cdot \mid M_A=\mathbf{0})),
\end{align*}
it is enough to show that for all $A \in \mathcal{A}$, and all $m_A \in \mathcal{M}_A$,
\begin{align*}
D_{KL}(h_{m_A} \mid \mid  p_A( \cdot | M_A=\mathbf{0})) \geq  D_{KL}(p^*_{m_A} \mid \mid  p_A( \cdot| M_A=\mathbf{0})).
\end{align*}
We thus drop the subscript $A$ for simplicity. 

It then holds that for all $m \in \mathcal{M}$,
\begin{align*}
 h_m(x)= h(x|M=m)&=h(o^c(x, m)| o(x, m), M=m) p^*(o(x,m)| M=m),
\end{align*}
by the definition of $\mathcal{H}_P$. Similarly, 
\begin{align*}
    p(x|M=\mathbf{0})&=p^*(o^c(x,m)| o(x,m), M=\mathbf{0}) p^*(o(x,m) |  M=\mathbf{0}) \\
    &=p^*(o^c(x,m)| o(x,m)) p^*(o(x,m) |  M=\mathbf{0}),
\end{align*}
where we used assumption (4.4) in the last step. Crucially, we can then decompose the KL divergence:
\begin{align*}
    &D_{KL}(h_{m} \mid \mid  p( \cdot | M=\mathbf{0}))\\
    &= \int \log \left( \frac{h(o^c(x, m)| o(x, m), M=m) p^*(o(x,m)| M=m)}{p^*(o^c(x,m)| o(x,m))  p^*(o(x,m) |  M=\mathbf{0})} \right) h(x \mid M=m) d\mu(x) \\
        &= \int \log \left( \frac{h(o^c(x, m)| o(x, m), M=m)}{p^*(o^c(x,m)| o(x,m))} \right) h(o^c(x, m)| o(x, m), M=m) p^*(o(x,m)| M=m) d\mu(x)\\
        &+\int \log \left( \frac{ p^*(o(x,m)| M=m)}{ p^*(o(x,m) |  M=\mathbf{0})} \right) h(o^c(x, m)| o(x, m), M=m) p^*(o(x,m)| M=m) d\mu(x)\\
         &=\E_{o(x, m) \sim p^*_m} \left[ \int \log \left( \frac{h(o^c(x, m)| o(x, m), M=m)}{p^*(o^c(x,m)| o(x,m))} \right) h(o^c(x, m)| o(x, m), M=m) d\mu(o^c(x,m)) \right]\\
        &+\int \log \left( \frac{ p^*(o(x,m)| M=m)}{ p^*(o(x,m) |  M=\mathbf{0})} \right)  p^*(o(x,m)| M=m) d\mu(o(x,m)).\\
\end{align*}
The second summand in the last term is simply the KL divergence between $p^*(o(x,m)| M=\mathbf{0})$ and $p^*(o(x,m)| M=m)$ and represents the irreducible part, as it cannot be changed by any imputation. The first summand is bounded below by zero, and attains zero for $h=p^*$, proving the claim.
\end{proof}

\citet{whatismeant} point out the problematic definition of MAR throughout the literature and define MAR as:
\begin{align}
&\Prob(M_A=m_A | X_A=x_A) = \Prob(M_A=m_A| X_A=\tilde{x}_A) \nonumber \\
& \text{ for all } x_A, \tilde{x}_A \text{ s.t. } o(x_A,m_A)=o(\tilde{x}_A,m_A) \tag{S2.1}.
\label{MARcondtrue}
\end{align}
In the following we ensure that (4.4) and \eqref{MARcondtrue} really mean the same: 

\begin{lemma}
Condition (4.4) and \eqref{MARcondtrue} are equivalent.
\end{lemma}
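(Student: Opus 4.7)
The plan is to show that both conditions are Bayes-rule reformulations of the single statement ``$\Prob(M_A = m_A \mid X_A = x_A)$ depends on $x_A$ only through the observed coordinates $o(x_A, m_A)$.''

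I would drop the subscript $A$ throughout and fix a pattern $m$, abbreviating $o = o(x, m)$ and $o^c = o^c(x, m)$. The core calculation is the Bayes identity
\[
p^*(o^c \mid o, M = m') \;=\; \frac{\Prob(M = m' \mid X = x)}{\Prob(M = m' \mid o)} \, p^*(o^c \mid o),
\]
obtained by writing $p^*(o^c, o \mid M = m') = \Prob(M = m' \mid X = x)\, p^*(x) / \Prob(M = m')$ and dividing by $p^*(o \mid M = m')$, which is the integral of the former over $o^c$. This identity recasts (4.4) as the requirement that $\Prob(M = m' \mid X = x)$ equal $\Prob(M = m' \mid o(x, m))$, i.e.\ a measurability statement for the missingness probability with respect to the observed-under-$m$ coordinates.

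For the direction (4.4) $\Rightarrow$ (S2.1), I would take $m' = m$ in the identity; invoking (4.4) forces $\Prob(M = m \mid X = x)$ to depend on $x$ only through $o(x, m)$, which is exactly (S2.1) once rewritten in the ``two $x, \tilde x$ with common observed part receive the same probability'' form. For the converse, (S2.1) gives $\Prob(M = m \mid X = x) = \varphi_m(o(x, m))$ for some function $\varphi_m$; substituting into the Bayes identity collapses the ratio on the right-hand side to $1$ and delivers (4.4).

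The main obstacle is bookkeeping: treating $o$ and $o^c$ as genuine independent arguments of the densities (rather than deterministic functions of the single symbol $x$), and verifying the marginalisation $p^*(o \mid M = m') = \int p^*(o^c, o \mid M = m')\, d\mu(o^c)$ — the step where integration over the $m$-missing part interacts with conditioning on $M = m'$. Conditional densities must be taken on the support of $(X_A, M_A)$ so that the Bayes identity is well defined, and then the equivalence follows transparently from the chain of reformulations above.
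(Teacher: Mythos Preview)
Your approach is essentially the same as the paper's: both arguments hinge on the Bayes-rule identity linking $p^*(o^c\mid o, M=m)$ to $\Prob(M=m\mid X=x)/\Prob(M=m\mid o)$, and both conclude the converse direction by showing this ratio equals $1$ (the paper by integrating its reformulation (S2.2) over $o^c(x,m)$, you by direct substitution of $\varphi_m(o(x,m))$). One cautionary remark that applies equally to the paper's argument: as written, your converse step only forces the ratio to $1$ when $m'=m$, whereas condition~(4.4) is stated for all pairs $m,m'$; be sure you are clear on which version of (4.4) you are actually recovering.
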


\begin{proof}
Throughout we drop the subscript $A$ for simplicity. We start by reformulating \eqref{MARcondtrue}, for any $x, \tilde{x}$ such that $o(x,m)=o(\tilde{x},m)$,
\begin{align}\label{MARCondtrue2}
    &\Prob(M=m | X=x) = \Prob(M=m| X=\tilde{x}) \Leftrightarrow \nonumber \\
    & \frac{p^*(x  |M =m )\Prob(M =m )}{p^*(x )} = \frac{p^*(\tilde{x}  |M =m )\Prob(M =m )}{{p^*(\tilde{x} )}} \Leftrightarrow \nonumber\\
    & \frac{p^*(o(x ,m ), o^c(x ,m ) \mid M  =m )}{p^*(o(\tilde{x} ,m ), o^c(\tilde{x} ,m ) \mid M  =m )} = 
    \frac{p^*(o(x , m ), o^c(x ,m ))}{p^*(o(\tilde{x} , m ), o^c(\tilde{x} ,m ))} \Leftrightarrow  \nonumber\\
    & \frac{p^*(o^c(x ,m ) \mid o(x ,m ), M =m )}{p^*(o^c(x ,m )\mid o(x ,m ))} = \frac{p^*(o^c(\tilde{x} ,m ) \mid o(x ,m ), M =m )}{p^*(o^c(\tilde{x} ,m )\mid o(x ,m ))}  \Leftrightarrow \nonumber\\
    & p^*(o^c(x ,m ) \mid o(x ,m ), M =m ) = \frac{p^*(o^c(\tilde{x} ,m ) \mid o(x ,m ), M =m )}{p^*(o^c(\tilde{x} ,m )\mid o(x ,m ))} p^*(o^c(x ,m )\mid o(x ,m )) \tag{S2.2}
\end{align}
Clearly, (4.4) implies \eqref{MARCondtrue2}. Integrating \eqref{MARCondtrue2} with respect to the missing part of $x$, $o^c(x,m)$, only shows that 
\[
\frac{p^*(o^c(\tilde{x} ,m ) \mid o(x ,m ), M =m )}{p^*(o^c(\tilde{x} ,m )\mid o(x ,m ))}=1,
\]
and thus also (4.4).
\end{proof}

\begin{lemma}[Restatement of Lemma 4.1]
Let $\pi_{m_A}(x_A),\pi^{Y_{m_A}}(x_A)$ and $\pi_{m_A}$ be defined as in (4.6) and (4.8) respectively. If $\pi_{m_A}= 0.5$, it holds that
$$\pi_{m_A}(x_A) = \pi^{Y_{m_A}}(x_A).$$
\end{lemma}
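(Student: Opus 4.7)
The plan is a direct application of Bayes' rule, exactly as the authors hint. The key observation is translating the sampling description of $Y_{m_A}$ into the conditional densities of $X_A$ given $Y_{m_A}$.

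First, I would note that by the construction of $Y_{m_A}$, the class label is simply an indicator of which sample the observation was drawn from: conditional on $Y_{m_A}=1$, the point $X_A$ is drawn from $\mathcal{S}^P_A$ and hence has density $p_A(\cdot\mid M_A=\mathbf{0})$; conditional on $Y_{m_A}=0$, it is drawn from $\mathcal{S}^H_{m_A}$ and hence has density $h_{m_A}$. Second, I would apply Bayes' rule to $\pi^{Y_{m_A}}(x_A)=\Prob(Y_{m_A}=1\mid X_A=x_A)$ to obtain
\begin{equation*}
\pi^{Y_{m_A}}(x_A)=\frac{p_A(x_A\mid M_A=\mathbf{0})\,\pi_{m_A}}{p_A(x_A\mid M_A=\mathbf{0})\,\pi_{m_A}+h_{m_A}(x_A)\,(1-\pi_{m_A})}.
\end{equation*}
Finally, I would substitute $\pi_{m_A}=1/2$, at which point the common factor $1/2$ cancels from numerator and denominator, leaving precisely the expression for $\pi_{m_A}(x_A)$ defined in \eqref{def_pA}.

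There is essentially no obstacle here; the only things to be careful about are (i) recognizing that the conditional densities of $X_A$ given $Y_{m_A}$ are exactly those defining $\pi_{m_A}(x_A)$, and (ii) making sure the denominator in Bayes' rule is nonzero, which holds on the support of $H_{m_A}$ that appeared in the definition of $\pi_{m_A}(x_A)$. No further assumptions on the missingness mechanism are needed, so the result is purely a consequence of the classification reformulation coupled with class balancing.
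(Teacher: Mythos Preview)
Your proof is correct and follows essentially the same approach as the paper: identify the class-conditional densities of $X_A$ given $Y_{m_A}$, apply Bayes' formula, and use $\pi_{m_A}=1/2$ to cancel the prior. Your version is in fact slightly cleaner, since you write the denominator with $(1-\pi_{m_A})$ before substituting, whereas the paper already inserts $\pi_{m_A}$ in both places.
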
 
 
\begin{proof}
Given the definition of the class label $Y_{m_A}$, we can rewrite $p_A(x_A \mid M_A=\mathbf{0})$ and $ h_{m_A}(x_A)$ by
\begin{align*}
  p_A(x_A \mid M_A=\mathbf{0}) &= f(x_A \mid  Y_{m_A}=1),\\
 h_{m_A}(x_A) &= f(x_A \mid  Y_{m_A}=0).
\end{align*}  
By Bayes Formula it now follows
\begin{align*}
       \pi^{Y_{m_A}}(x_A) &:= \Prob(Y_{m_A}=1 \mid X_A=x_A)\\
      &= \frac{f(x_A \mid Y_{m_A}=1)\Prob(Y_{m_A}=1)}
      {f(x_A \mid Y_{m_A}=1) \Prob(Y_{m_A}=1) +f(x_A \mid Y_{m_A}=0) \Prob(Y_{m_A}=0)}\\
      &= \frac{p_A(x_A \mid M_A=\mathbf{0}) \pi_{m_A}}{p_A(x_A \mid M_A=\mathbf{0})\pi_{m_A} + h_{m_A}(x_A)\pi_{m_A}}, 
\end{align*}
such that if $\pi_{m_A}= 0.5$, we obtain that $  \pi^{Y_{m_A}}(x_A) =   \pi_{m_A}(x_A)$ for any observation $x$. 
\end{proof}

\section{Details on Propriety Assessment}\label{app:propriety}

We used the following pipeline to obtain the results of Section 7.3:
\begin{enumerate}
    \item For the two missingness mechanisms MAR and MCAR we consider two overall probabilities of missingness ($p_{\text{miss}}= 0.1$ and $0.2$). For each of the $p_\text{miss}$ we do:
    \item For each fully observed data set \texttt{X} we do:
    \begin{enumerate}
    \item We mask \texttt{X} respecting the missingness mechanism as well as $p_{miss}$ and obtain a fixed \texttt{X.NA}.
    \item We apply $30$ times $1/2$-subsampling on \texttt{X} and \texttt{X.NA} and obtain $30$ subsampled \texttt{X}$^S$ and \texttt{X.NA}$^S$.
    \item We score \texttt{X} with the DR I-Score.
    \item For each method in methods we do: 
    \begin{enumerate}
        \item We impute \texttt{X.NA} $m=5$ times.
    \item We score each of the $5$ imputed versions of \texttt{X.NA} with the DR I-Score and get the final score by averaging.
    \item We compute the difference $D$ of the DR I-Score of the imputation of \texttt{X.NA} and the DR I-Score of \texttt{X}.
    \item For each of the $S=1,\ldots,30$ we do:
    \begin{enumerate}
    \item We apply steps (c) and i. and ii. to \texttt{X}$^S$ and  \texttt{X.NA}$^S$.
    \item We compute the difference $D^S$ of the DR I-Score of the imputation of \texttt{X.NA}$^S$ and the DR I-Score of \texttt{X}$^S$.
    \end{enumerate}
    \item We estimate the variance of $D$, $\sigma^2(D)$, with the Jackknife variance estimator formula (5.2) based on $D^S$.
    \end{enumerate}
    \item We compute a $p$-value by $P_{H_0}(X > D)$, where $X \overset{H_0}{\sim} \mathcal{N}(0,\sigma^2(D)) $
\end{enumerate}

\end{enumerate}

\section{Empirical Results for $p_{\text{miss}}=0.1$} \label{app:pmisses}

In this section we present additional results for $p_{\text{miss}}=0.1$. 

\begin{figure}[H]
     \centering
     \begin{subfigure}[b]{0.48\textwidth}
         \centering
         \includegraphics[width=0.97\textwidth]{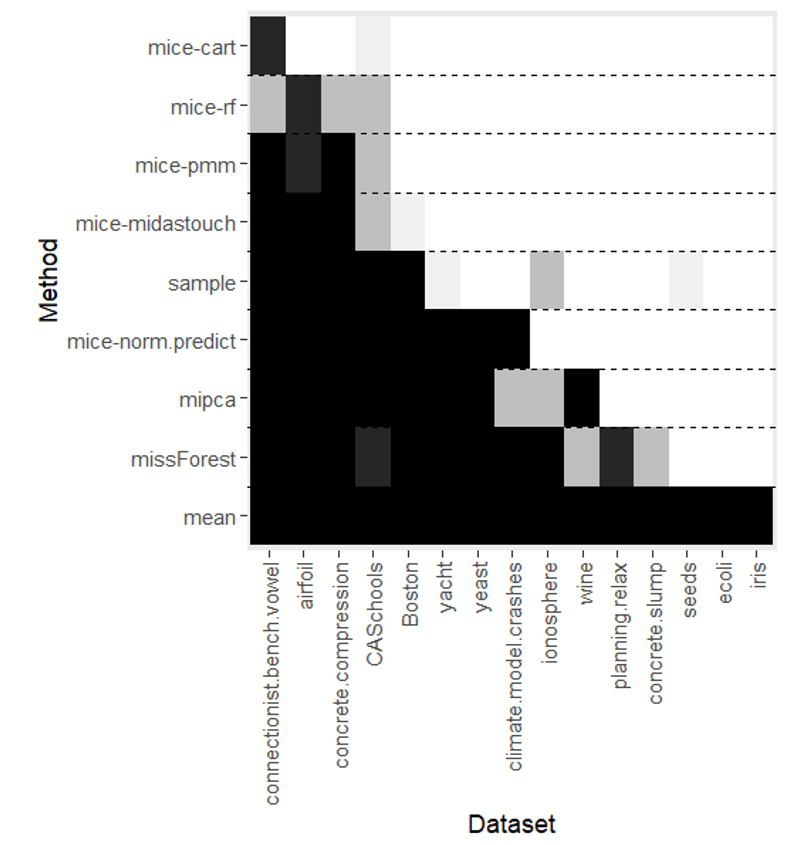}
              \caption{MAR}
     \end{subfigure}
     \begin{subfigure}[b]{0.48\textwidth}
         \centering
           \includegraphics[width=1.2\textwidth]{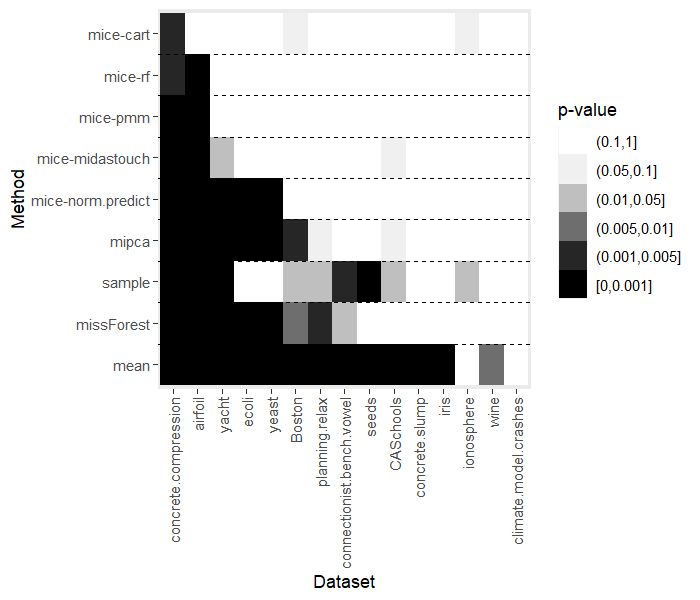}
         \caption{MCAR}
     \end{subfigure}
   \caption{Discretized $p$-values of testing (7.3) under assumption (7.2) for the $9$ methods applied to the $15$ data sets. We used missingness mechanisms MAR (a) and MCAR (b), $p_{miss}=0.1$ and $m=5$. The parameter values of the DR I-Score are described in the main text. }
\end{figure}

\begin{figure}[H]
     \centering
     \begin{subfigure}[b]{0.46\textwidth}
         \centering
         \includegraphics[width=1.1\textwidth]{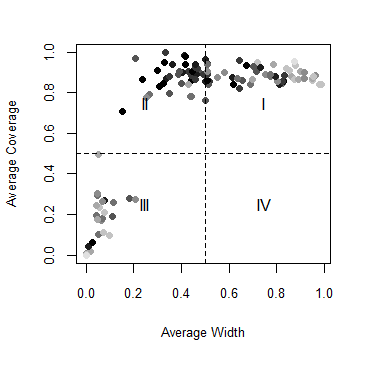}
         \vspace*{-1.2cm}
              \caption{MAR}
     \end{subfigure}
     \begin{subfigure}[b]{0.46\textwidth}
         \centering
           \includegraphics[width=1.1\textwidth]{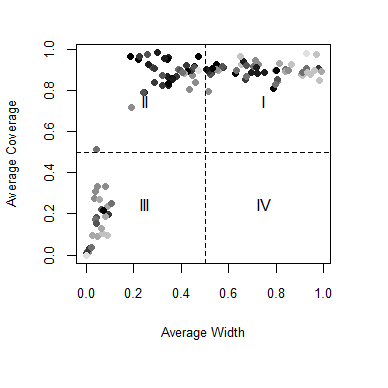}
           \vspace*{-1.2cm}
         \caption{MCAR}
     \end{subfigure}
 \caption{Average coverage plotted against average width for the $9$ methods applied to the $15$ data sets (total = $9\times15=135$ points). The darkness indicates the rank induced by the DR I-Score (the darker, the higher the rank). We used the missingness mechanism MAR in (a) and MCAR in (b) with $p_{miss}=0.1$, $m=20$ and the in the text described parameter values to compute the DR I-Scores.}
 \label{supp:bobeli}
\end{figure}

 \begin{figure}[H]
\begin{minipage}{0.45\textwidth}
\centering
\resizebox{1\textwidth}{!}{
\begin{tabular}{|c|c c  c|}
  \hline
method/quadrant & I    & II    & III \\[0.5ex] 
\hline\hline
cart & 0.20 & 0.80 & 0 \\ 
  pmm & 0.40 & 0.60 & 0 \\ 
  midastouch & 0.40 & 0.60 & 0 \\ 
  rf & 0.67 & 0.33 & 0 \\ 
  mipca & 0.87 & 0.13 & 0 \\ 
  sample & 1.00 & 0 & 0 \\ 
  norm.predict & 0 & 0 & 1 \\ 
  mean & 0 & 0 & 1 \\ 
  missForest & 0 & 0 & 1 \\ 
   \hline
\end{tabular}
}
\subcaption{MAR}
\end{minipage}
\begin{minipage}{0.45\textwidth}
 \resizebox{1\textwidth}{!}{
\begin{tabular}{|c|c c  c|}
  \hline
method/quadrant & I    & II    & III \\[0.5ex] 
\hline\hline
pmm & 0.33 & 0.67 & 0 \\ 
  cart & 0.33 & 0.67 & 0 \\ 
  midastouch & 0.40 & 0.60 & 0 \\ 
  rf & 0.53 & 0.47 & 0 \\ 
  mipca & 0.87 & 0.13 & 0 \\ 
  sample & 0.93 & 0.07 & 0 \\ 
  missForest & 0 & 0.07 & 0.93 \\ 
  norm.predict & 0 & 0 & 1 \\ 
  mean & 0 & 0 & 1 \\ 
   \hline
\end{tabular}
}
\subcaption{MCAR}
\end{minipage}
\captionof{table}{The fraction of times each method appeared in the quadrants I, II and III of Figure \ref{supp:bobeli} in the MAR case (a) and the MCAR case (b).}

\end{figure}

\section*{Supplementary Material}

\textbf{Computer Codes and Data for ``Imputation Scores''.} Computer codes and data used in ``Imputation Scores'' are contained in this zip-folder.

\end{appendix}










\newpage
\bibliographystyle{apalike}
\bibliography{biblio}

\end{document}